\newtheorem{theorem}{Theorem}[section]
\newtheorem{remark}{Remark}
\newtheorem{assump}{Assumption}
\newtheorem{assumpZ}{Assumption}
\newtheorem{assumpW}{Assumption}
\renewcommand{\theassumpZ}{Z\arabic{assumpZ}}
\renewcommand{\theassumpW}{W\arabic{assumpW}}
\newcommand*{\tran}{\top}
\DeclareFontFamily{U}{mathx}{}
\DeclareFontShape{U}{mathx}{m}{n}{<-> mathx10}{}
\DeclareSymbolFont{mathx}{U}{mathx}{m}{n}
\DeclareMathAccent{\widehat}{0}{mathx}{"70}
\DeclareMathAccent{\widecheck}{0}{mathx}{"71}
\crefname{assump}{Assumption}{Assumptions}
\Crefname{figure}{Figure}{Figures}
\crefname{assumpZ}{Assumption}{Assumptions}
\crefname{assumpW}{Assumption}{Assumptions}
\def\spacingset#1{\renewcommand{\baselinestretch}%
{#1}\small\normalsize} \spacingset{1}
\theoremstyle{plain}
\newlist{propenum}{enumerate}{1} 
\setlist[propenum]{label=(\alph*), ref=\theproposition~(\alph*)}
\newlist{lemenum}{enumerate}{1} 
\setlist[lemenum]{label=(\alph*), ref=\thelemma~(\alph*)}
\newlist{thmenum}{enumerate}{1} 
\setlist[thmenum]{label=(\alph*), ref=\thetheorem~(\alph*)}
\newlist{assumpenum}{enumerate}{1} 
\setlist[assumpenum]{label=(\alph*), ref=\theassump~(\alph*)}
\newlist{assumpZenum}{enumerate}{1} 
\setlist[assumpZenum]{label=(\alph*), ref=\theassumpZ~(\alph*)}
\newlist{assumpWenum}{enumerate}{1} 
\setlist[assumpWenum]{label=(\alph*), ref=\theassumpW~(\alph*)}
\newcommand{\convp}{\overset{p}{\to}}
\newcommand{\convd}{\overset{d}{\to}}
\newcommand{\limT}{\lim_{T \to \infty}}
\newcommand{\plimT}{\operatorname{plim}_{T \to \infty}}
\newcommand{\fracT}{\frac{1}{T}}
\newcommand{\fracN}{\frac{1}{N}}
\newcommand{\fracrootN}{\frac{1}{\sqrt{N}}}
\newcommand{\sumT}{\sum_{t = 1}^{T}}
\newcommand{\sumTs}{\sum_{s = 1}^{T}}
\newcommand{\sumN}{\sum_{i = 1}^{N}}
\newcommand{\sumNj}{\sum_{j = 1}^{N}}
\newcommand{\sumNk}{\sum_{k = 1}^{N}}
\newcommand{\floor}[1]{\left \lfloor #1 \right \rfloor }
\newcommand{\sumTfloor}{\sum_{t = 1}^{\floor{\pi T}}}
\newcommand{\sumTfloort}{\sum_{t = \floor{\pi T + 1}}^{T}}
\newcommand{\gt}{>}
\newcommand{\lt}{<}
\newcommand{\Op}[1]{O_p \left( #1 \right) }
\newcommand{\indicator}[1]{\mathbf{1} \left\lbrace #1 \right\rbrace }
\newcommand{\suppi}{\underset{\pi \in [\pi_1, \pi_2]}{\operatorname{sup}}}
\begin{document}
\newcommand{\blind}{1}

\if1\blind

{\title{\bf Disentangling Structural Breaks in Factor Models for Macroeconomic Data\thanks{
	We thank the editor, Atsushi Inoue, 
	an anonymous associate editor, three anonymous referees, 
	Matteo Barigozzi, Xu Han, Luke Hartigan, Daniele Massacci, Serena Ng, Mirco Rubin; 
	conferences participants at 
		2022 EC-squared, 
		2023 Society for Financial Econometrics (SoFiE), 
		2023 International Symposium on Econometric Theory and Applications (SETA), 
		2023 Chinese University of Hong Kong Workshop on Econometrics, 
		2025 Symposium of the Society for Nonlinear Dynamics and Econometrics (SNDE), 
		6th Dolomiti Macro Meetings; 
	and seminar participants at 
		City University of Hong Kong, 
		BI Norwegian Business School, 
		Seoul National University, CREST, 
		and Erasmus University Rotterdam. 
	We acknowledge the financial support of the Australian Research Council (DE200100693, DP210101440, DP220100321 and DP240100970).}}

\author{Bonsoo Koo\thanks{
		Department of Econometrics and Business Statistics, Monash University, Melbourne, Australia. 
		Emails: Koo: bonsoo.koo@monash.edu; 
		Wong: benjamin.wong@monash.edu; 
		Zhong: zeyu.zhong.1@unimelb.edu.au}}
\author{Benjamin Wong$^\dagger$}
\author{Ze-Yu Zhong\thanks{
Corresponding author. Email:
zeyu.zhong.1@unimelb.edu.au}}
  \affil{
  	$^\dagger$Monash University, 
  	$^\ddagger$University of Melbourne}
  \date{Dated: \today}
  \maketitle
} \fi

\if0\blind
{
  \bigskip
  \bigskip
  \bigskip
  \begin{center}
    {\LARGE\bf Disentangling Structural Breaks in Factor Models for Macroeconomic Data}
\end{center}
  \medskip
} \fi

\bigskip
\begin{abstract}
We develop a projection-based decomposition to disentangle structural breaks in the factor variance and factor loadings. Our approach yields test statistics that can be compared against standard distributions commonly used in the structural break literature. Because standard methods for estimating factor models in macroeconomics normalize the factor variance, they do not distinguish between breaks of the factor variance and factor loadings. Applying our procedure to U.S. macroeconomic data, we find that the Great Moderation is more naturally accommodated as a break in the factor variance as opposed to a break in the factor loadings, in contrast to extant procedures which do not tell the two apart and thus interpret the Great Moderation as a structural break in the factor loadings. Through our projection-based decomposition, we estimate that the Great Moderation is associated with an over 70\% reduction in the total factor variance, highlighting the relevance of disentangling breaks in the factor structure.
\end{abstract}

\noindent%
{\it Keywords:} factor space, structural instability, breaks, principal components, dynamic factor models
\vfill


\newpage
\spacingset{1.8} 

\section{Introduction}
Dynamic factor models are used in empirical macroeconomics and finance as a form of dimension reduction, where a large set of time series are summarized through a small number of latent factors \cite[e.g.][]{aastveit_what_2015,alessi_response_2019,barigozzi_measuring_2023}. There is now increasing evidence of structural instability with factor models estimated on macroeconomic time series \cite[see, e.g.][]{breitung_testing_2011,stock_chapter_2016}. The analysis of structural breaks in factor models presents unique challenges because breaks in the loadings and breaks in the factor variance cannot be easily disentangled. For simplicity, consider the following representative factor model common in applied work for $x_{it}, t = 1, \dots, T, i = 1, \dots, N$:
\begin{align}
\label{eqn:static_form}
x_{it} = \lambda_{i}^\tran f_t + e_{it},
\end{align}
where $\lambda_{i}$ is an $r \times 1$ vector of individual loadings, $f_t$ is an $r \times 1$ vector of factors, and $e_{it}$ is the idiosyncratic component. Because both the factors and loadings are unobserved and enter multiplicatively, a normalization is needed to separately identify them. A common approach is to normalize the variance of $f_t$. While such normalizations are often innocuous, they \textit{matter} for studying structural breaks in factor models; if breaks in the factor variance are ruled out through fixing the variance through the normalization, these must manifest as breaks in the factor loadings even if the loadings are stable. We note similar concerns have previously been raised \cite[see][]{stock_chapter_2016}.


Our contribution is to propose a projection-based decomposition to disentangle structural breaks in the factor variance and the loadings in dynamic factor models. At a high level, the projection-based decomposition reparameterizes structural breaks in the factor structure into a rotational and orthogonal component where each has an interpretation of a change in the factor covariance matrix and a change in the factor loadings, respectively.  More specifically, rotations can be thought of as some suitable twisting or stretching of the original factor space, and so are associated with the factor variance. The shift component, on the other hand, arises from recognizing that breaks in the loadings are orthogonal to the original factor space.
This insight leads to the reparameterization yielding two test statistics: (i) a test for a break in the factor covariance matrix; and (ii) a test for a break in the factor loadings. We show that these test statistics that can be compared against standard chi-square or supremum-type critical values.

In principle, while the decomposition can disentangle \textit{any} generic identifiable change in the factor structure into the two components, our work is most closely related to work which tests for structural breaks in factor models using the principal components estimator \cite[see][]{stock_forecasting_2009,breitung_testing_2011,chen_detecting_2014,han_tests_2015,baltagi_identification_2017}. Given breaks in either the factor variance and/or loadings \textit{must} manifest as breaks in the loadings in these tests, we view our procedure as being complementing, rather than supplanting, current procedures: one could always first test for breaks in the factor structure using one of these aforementioned procedures and subsequently use our insight to disentangle the break.

From the perspective of dynamic factor models often used in macroeconomics, these breaks imply different interpretations. While breaks in the loadings relate to changes in how variables relate to the factors, breaks in the factor variance imply breaks in the factor dynamics. These breaks in the factor dynamics can be in the form of a break in the dynamic process generating the factors and/or breaks in the variance of the underlying shocks to the factors. 
Confining breaks to just the loadings would therefore \textit{a priori} preclude breaks in the factor dynamics as a possible interpretation. 
While whether such interpretations are misleading or not is context-dependent, we note that breaks in the factor dynamics present a more natural interpretation of at least one historical episode where there is consensus of a general reduction in the variance across multiple macroeconomic time series: the Great Moderation. 
Indeed, in an empirical application with U.S. macroeconomic data, our tests confirm the Great Moderation as having a break in the factor covariance matrix, where through our projection-based decomposition, we estimate an over 70\% reduction in the total variance of the factors (i.e. the trace of the factor covariance matrix). 
This finding should be unsurprising to applied macroeconomists, but is an effective proof-of-concept underpinning our basic argument: only by disentangling breaks in factor variances and loadings can one attribute a break in the factor variance as part of the most well-known change in volatility common across multiple macroeconomic time series. 
We note by not disentangling the break, extant work associate breaks in the factor loadings with the Great Moderation \cite[e.g.][]{breitung_testing_2011,baltagi_estimating_2021,bai_likelihood_2024}. 
Our results suggest a more nuanced interpretation of breaks in factor models.

We proceed as follows. In Section 2, we first motivate how breaks in the factor variance and loadings manifest through an underlying dynamic factor model before introducing our projection-based decomposition. Section 3 presents the theory underpinning our tests followed by Monte Carlo simulations in Section 4. Section 5 presents an empirical application with 124 quarterly U.S. macroeconomic time series. Section 6 concludes.

\section{Interpreting and Reparameterizing Structural Breaks in Dynamic Factor Models}
We  first clarify how breaks in the factor variance and loadings imply distinct interpretations. Consider the following representative dynamic factor model \citep[e.g.][]{stock_chapter_2016}:
\begin{align}
\label{eqn:static_form2}
X_{t} &= \Lambda f_t + e_{t}, \\
\label{eqn:factor_dynamics}
f_t &= \sum_{j = 1}^{p} \Phi_j f_{t - j} + \eta_t, \quad \eta_t \sim \qty(0, \Sigma_\eta),
\end{align}
where \Cref{eqn:static_form2} stacks \Cref{eqn:static_form} across the cross-section such that $X_t$ and $e_t$ are $N \times 1$, and $\Lambda = [\lambda_{1}, \dots, \lambda_N]^\tran$ is $N \times r$. 
To focus the discussion, we omit the intercept and so the variables are either demeaned or mean zero, as per the typical setup for factor models. 
\Cref{eqn:factor_dynamics} describes the dynamics of the factors, where $\Phi_j$ are autoregressive coefficients for $f_t$, and $\eta_t$ are $r \times 1$ (reduced form) innovations with covariance $\Sigma_\eta$. 
\Cref{eqn:static_form2,eqn:factor_dynamics} clarify how changes in the factor variance and loadings imply different interpretations.\footnote{We consider what is known as the static form of the dynamic factor model. 
This suffices for our general point as the dynamic factor model has a static representation. The key difference in the general dynamic case is that there are potentially less (dynamic) factors, and that the static factors in $f_t$ in \Cref{eqn:static_form2} would now span the lags of the dynamic factors as opposed to being the factors themselves.} 
In particular, the (unconditional) covariance matrix of $f_t$ is both a function of the $\Phi_j$'s and $\Sigma_\eta$. 
Therefore, breaks in the factor variance necessitates a break in \Cref{eqn:factor_dynamics}, either in the $\Phi_j$'s, $\Sigma_\eta$, or both. 
Breaks in the $\lambda_{i}$'s, on the other hand, are isolated to breaks in the relationship between the variables and the factors in \Cref{eqn:static_form2}. 
As an illustrative example, consider the Great Moderation, an event marked by a reduction of volatility across many macroeconomic variables. 
Such an interpretation, from the perspective of the dynamic factor model, is more naturally accommodated as a reduction in the variance $f_t$ rather than multiple (proportional) breaks in the $\lambda_i$'s.

Because the factors are identified up to rotation, one needs to impose some form of normalization. 
A common approach, which we primarily focus on, is to use the principal components as it is known that the principal components estimator is able to consistently estimate the space spanned by the dynamic factors under very general conditions. 
Denoting the principal components estimator of the factors $\tilde{F} = \qty[\tilde{f}_1, \dots, \tilde{f}_T]^\tran$ and loadings $\tilde{\Lambda} = \qty[\tilde{\lambda}_1, \dots, \tilde{\lambda}_N]^\tran$, a common normalization imposes
\begin{align}
\label{eqn:norm1}
\fracN \tilde{\Lambda}^\tran \tilde{\Lambda} = V_{NT}, \quad 
\fracT \tilde{F}^\tran \tilde{F} = I_r,
\end{align}
where $V_{NT}$ is a diagonal matrix whose entries are the first $r$ eigenvalues of the covariance matrix of $X$. 
Therefore, the principal components estimator can be used to estimate both $\lambda_{i}$ and $f_t$, and the fitting and specification of \Cref{eqn:factor_dynamics} can occur as a separate step \citep[see][for more details]{stock_chapter_2016}. 
The normalization one applies for estimation convolutes the interpretation of breaks - structural break tests of the factor loadings often first estimate the factors on the full sample, then test for breaks in $\tilde{\lambda}_i$ \citep[e.g.][]{stock_forecasting_2009,breitung_testing_2011,chen_detecting_2014}. 
The necessary application of normalizations like \Cref{eqn:norm1} when estimating the factors thus makes it unclear whether finding a break in $\tilde{\lambda}_i$ is a break in $\lambda_i$, or a break in the variance of factors, since the latter is typically assumed to be unchanging as an identification condition as part of constructing the test statistic \cite[see][]{stock_chapter_2016,chen_detecting_2014}.

While not our focus, we briefly touch on other estimation methods, and how this similarly affects them. 
Though not the form of breaks that we consider, other estimation approaches such as parametric state-space methods that directly estimate \Cref{eqn:static_form2,eqn:factor_dynamics} but allowing for loadings and the factor dynamics to change, or time-varying factor models, also have to impose normalizations to meet identification conditions. 
These range from breaks in the loadings as not being common across variables \cite[e.g.][]{del_negro_dynamic_2008,abbate_changing_2016},  
to something similar to \Cref{eqn:norm1} \cite[e.g.][]{fu_estimation_2023}. 
Therefore, similar normalization issues exist whether one chooses to work with the principal components estimator or parametric state-space form, where the latter may also require the researcher to be explicit on how to separate changes in the factor variance and factor loadings as part of the identification conditions.

We note that if the goal is to establish whether there are breaks in the factor \emph{structure} (i.e. the loadings and/or the factor variance), current methods of normalizing the variance, and then testing for breaks in the loadings suffice, since any breaks in the factor variance are subsumed into the factor loadings. However, if one wanted to appropriately interpret breaks, especially from the perspective of the dynamic factor model implied by \Cref{eqn:static_form2,eqn:factor_dynamics}, breaks in the loadings and the factor variance need to be distinguished. In what follows, we present a reparameterizaton to disentangle these breaks.

\subsection{A Projection-based Decomposition to Disentangle Breaks}
We now introduce structural breaks to the dynamic factor model in \Cref{eqn:static_form2,eqn:factor_dynamics}. We only work with \Cref{eqn:static_form2} in what follows since the fitting of \Cref{eqn:factor_dynamics} can occur separately, and we can consistently estimate the space spanned by the static factors. 
Let $\pi$ denote the break fraction
which splits the data as
\begin{align}
\label{eqn:static_factor_rep}
X_{t} &= 
\begin{cases}
\Lambda_1 f_t + e_{t}, \quad \text{for} \quad t = 1, \dots, \floor{\pi T}, \\
\Lambda_2 f_t + e_{t}, \quad \text{for} \quad t = \floor{\pi T} + 1, \dots, T,
\end{cases}
\end{align}
where $f_t$ is an $r \times 1$ vector of factors, $\Lambda_1 = \qty(\lambda_{1, 1}, \dots, \lambda_{1, N})^\tran$ and $\Lambda_2 = \qty(\lambda_{2, 1}, \dots, \lambda_{2, N})^\tran$ are $N \times r$ pre- and post-break loadings, and $e_{t}$ are the idiosyncratic components. 
For expositional purposes, we treat the number of factors $r$ and the break fraction $\pi$ as known. In practice, these can be either consistently estimated or chosen \textit{a priori}, and we defer discussion on specific conditions necessary to Section 3.
By the formulation in \Cref{eqn:static_form}, any changes in the factor variance must be common across all series, whereas changes in the loadings are by definition idiosyncratic. This motivates us to decompose the change in the factor loadings, $\Lambda_2$, via a projection
\begin{align}
\label{eqn:projection}
\Lambda_2 = \Lambda_1 Z + W, 
\end{align}
where $Z$ is an $r \times r$ rotational change, and $W = \qty(w_1, \dots, w_N)^\tran$ is an $N \times r$ orthogonal shift component.
We focus on the case of $r$ being unchanging as per the extant literature, but note that a change from $r_1$ pre-break factors to $r_2$ post-break factors can be accommodated, and amounts to testing whether for a change in an appropriate sub-block of the factor variance, as detailed in Section A.6 of the Supplement.

The rotation $Z$ and orthogonal shift $W$ are associated with breaks in the factor covariance matrix and loadings,  respectively. Heuristically,  a break in the factor variance can be thought of some suitable twisting or stretching of the factors themselves, i.e. a mathematical rotation. Note that because breaks in $Z$ are breaks in the covariance matrix of the factors, they encompass breaks in both their variances and their correlations (i.e the diagonal and off-diagonals elements).
In contrast, a change in the loadings is idiosyncratic across series, and thus geometrically must lie outside and be orthogonal to the space spanned by the factors \citep[see][for similar interpretations]{wang_identification_2021,pelger_state-varying_2022,massacci_testing_2021}.

We emphasize that the projection-based decomposition can hold for \emph{any} generic structural change in the factor structure (e.g. thresholds, time-varying loadings, etc.). In what follows, we demonstrate how the reparameterizaton can be used to disentangle a one time structural \emph{break}, and naturally lead to test statistics which can disentangle changes in the factor variance and factor loadings. 

\subsection{Structural Break Setup}
Denoting the pre- and post-break sample lengths respectively as $T_1 = \floor{\pi T}$ and $T_2 = T - \floor{\pi T}$, \Cref{eqn:static_factor_rep} can be stacked in matrix form:
\begin{align}
\label{eqn:matrix_form}
X = 
\begin{bmatrix}
X_1 \\
X_2
\end{bmatrix} 
= \begin{bmatrix}
F_{1} \Lambda_1^\tran \\
F_{2} \Lambda_2^\tran 
\end{bmatrix} +
\begin{bmatrix}
e_{(1)} \\
e_{(2)}
\end{bmatrix}, 
\end{align}
where $F_{1} = (f_{1}, \dots, f_{T_1})^\tran$ are $T_1 \times r$ pre-break factors, $F_{2} = (f_{T_1 + 1}, \dots, f_{T})^\tran$ are $T_2 \times r$ post-break factors, $\Lambda_1$ and $\Lambda_2$ their respective loadings, $e_{(1)} = (e_{1}, \dots, e_{T_1})$ and $e_{(2)} = (e_{T_1 + 1}, \dots, e_T)$, and $X_1$, $X_2$ denote the respective partitions of $X$.  
By substituting \Cref{eqn:projection} into \Cref{eqn:matrix_form}, we can formulate an equivalent representation as follows 
\begin{align}
X &= 
\begin{bmatrix}
F_1 \Lambda_1^\tran \\
F_2 \left[\Lambda_1 Z + W\right]^\tran 
\end{bmatrix} + 
\begin{bmatrix}
e_{(1)} \\
e_{(2)}
\end{bmatrix} \nonumber \\
&= 
\label{eqn:projection_ert:1}
\begin{bmatrix}
F_1 & 0 \\
F_2 Z^\tran & F_2 
\end{bmatrix}
\begin{bmatrix}
\Lambda_1^\tran \\
W^\tran
\end{bmatrix} +
\begin{bmatrix}
e_{(1)} \\
e_{(2)}
\end{bmatrix} \\
X &= G \Xi^\tran + e.
\label{eqn:projection_ert:2}
\end{align}
\Cref{eqn:projection_ert:1} shows that any rotational changes induced by a non-identity $Z$ are absorbed into the factors, and any orthogonal shifts $W$ will augment the factor space. \Cref{eqn:projection_ert:2} highlights that if one were to ignore the break and use the principal components estimator over the whole sample, the estimator will be consistent for an observationally equivalent model with \emph{pseudo} factors $G$ and time invariant loadings $\Xi$. 

The observationally equivalent model with \textit{pseudo} factors plays a central role in multiple existing methods as they rely on an estimate $G$ in order to either test for breaks \cite[e.g.][]{han_tests_2015,chen_detecting_2014}, and/or estimate the break fraction \cite[e.g.][]{baltagi_identification_2017,baltagi_estimating_2021,duan_quasi-maximum_2022}. 
\Cref{eqn:projection_ert:2} shows that breaks in either $Z$ \emph{or} $W$ will induce a structural break in the pseudo factors $G$: assuming that $E\left(
f_t f_t^\tran 
\right) = \Sigma_{F}$, when $Z$ is non-identity the first $r$ columns of $G$ will correspond to the multivariate series $\left[ F_1^\tran, Z F_2^\tran \right]^\tran$ with covariance matrices $\Sigma_F$ and $Z \Sigma_F Z^\tran$,  respectively; when $W \neq \mathbf{0}$, the orthogonality of $W$ will induce the last $r$ columns of $G$ to be $\left[0, F_2^\tran\right]^\tran$, corresponding to a structural break where extra factors appear. Existing methods that rely on the \emph{pseudo} factors, $G$, will therefore detect a break if either $Z$ \emph{or} $W$ breaks, and so are unable to tell them apart. 
We note that extant work such as \cite{han_tests_2015}, \cite{baltagi_identification_2017}, and \cite{duan_quasi-maximum_2022} explicitly acknowledge this limitation, and therefore that their tests for breaks in the factor loadings may be capturing breaks in either the factor variance or factor loadings.

The use of a projection implied by Equation \eqref{eqn:projection} enables us to tell both types of breaks apart. Indeed, the case of a rotational break corresponds to $Z \neq I_r$, and can be naturally interpreted a change in the factor variance. This follows from \Cref{eqn:projection_ert:1} that the covariance matrix of the factors pre- and post-break are $\Sigma_F$ and $Z \Sigma_F Z^\tran$, respectively, which are in general different for non identity $Z$. Given that $Z$ captures changes in the factor variance, it follows that the remaining orthogonal shift where $W \neq \mathbf{0}$ must correspond to breaks in the factor loadings.\footnote{So far, we have considered the model without any intercept, as $X_t$ is typically demeaned. We briefly note that breaks in the intercepts of $X_t$ can similarly be classified as common across all series or idiosyncratic, and can therefore still be accommodated by our tests. For the rest of our paper, we take $X_t$ as being demeaned across the entire sample, like existing setups. If one wanted to control for breaks in the intercept, one possibility is to consider subsample demeaning as in \cite{bai_likelihood_2024}.}
Mechanically, these breaks lie outside the original factor space, which necessitates the estimation of more factors than necessary if one wanted to capture all the information while ignoring the break. It is this orthogonality of breaks in the loadings that underlies the ``factor augmentation'' effect in the pseudo factors $G$ raised by \cite{breitung_testing_2011}.

Disentangling these breaks entails testing for changes in these parameters: a test for a break in the factor variance corresponds to $\mathcal{H}_0: \Sigma_F = Z \Sigma_F Z^\tran$, and a test for a break in the factor loadings corresponds to $\mathcal{H}_0: W = \mathbf{0}$. 
We aim to determine which type of break has occurred, with test statistics that can be compared against standard chi-square or supremum-type critical values for the cases of a known or unknown (and thus estimated) break,  respectively. In the case of multiple breaks, practitioners can partition the data suitably and either separately or sequentially focus on different breaks, as we have done in our empirical application in \Cref{sec:empirical}.

\subsection{Implementing the Tests to Disentangle Structural Breaks}
\label{sec:break_tests:estimation}
We outline how to implement our tests to disentangle breaks in the factor variance and loadings. 
Our approach is a subsample one, and thus estimation depends on the sample splitting parameter $\pi$. However, to avoid notational complexity, we proceed by suppressing their notational dependence, deferring theoretical justifications to \Cref{sec:asym}.
The construction of our test statistics then follows for a given $\pi$ as:
\begin{enumerate}
\item Estimate the pre- and post-break factors $\tilde{F}_1$ and $\tilde{F}_2$ via principal components, subject to the normalizations $\frac{1}{T_1} \tilde{F}_1^\tran \tilde{F}_1 = \frac{1}{T_2} \tilde{F}_2^\tran \tilde{F}_2 = I_r$, i.e. $\tilde{F}_1$ is $\sqrt{T_1}$ times the first $r$ eigenvectors of $X_1 X_1^\tran$, and $\tilde{F}_2$ is $\sqrt{T_2}$ times the first $r$ eigenvectors of $X_2 X_2^\tran$. 
\item Conditional on the factors, estimate the pre- and post-break loadings via least squares as $\tilde{\Lambda}_1 = X_1^\tran \tilde{F}_1 \left( \tilde{F}_1^\tran \tilde{F}_1 \right)^{-1} = \frac{1}{T_1} X_1^\tran \tilde{F}_1$ and $\tilde{\Lambda}_2 = X_2^\tran \tilde{F}_2 \left( \tilde{F}_2^\tran \tilde{F}_2 \right)^{-1} = \frac{1}{T_2} X_2^\tran \tilde{F}_2$.
\item Estimate the rotational change and orthogonal shift as
\begin{align}
\label{eqn:Z_W_estimation}
\tilde{Z} &= \left( \tilde{\Lambda}_1^\tran \tilde{\Lambda}_1 \right)^{-1} \tilde{\Lambda}_1^\tran \tilde{\Lambda}_2, \\
\label{eqn:W_estimation}
\tilde{W} &= \tilde{\Lambda}_2 - \tilde{\Lambda}_1 \tilde{Z}.
\end{align}
\end{enumerate}
The estimates $\tilde{Z}$ and $\tilde{W}$ absorb the effects of the normalization bases present in $\tilde{\Lambda}_1$ and $\tilde{\Lambda}_2$, and hence cannot be tested directly. Instead, $\tilde{Z}$ and $\tilde{W}$ take on additional interpretations: post multiplying $\tilde{\Lambda}_1$ by $\tilde{Z}$ rotates its normalization basis to that of $\tilde{\Lambda}_2$ along with any rotational change $Z$; the remainder $\tilde{W}$ is the remaining idiosyncratic change.
%
We exploit this property in $\tilde{Z}$ to rotate the post-break factors onto the same basis as the pre-break factors,  defined as $\widehat{F} = \left[\tilde{F}_1^\tran, \tilde{Z} \tilde{F}_2^\tran \right]^\tran = \left[\widehat{f}_1, \dots, \widehat{f}_T\right]^\tran$ in order to purge out the effects of $W$, and thus use its (co)variance as a basis for a test for any rotational changes.
\subsection{Test Statistics}
\label{sec:break_tests:statistics}
We now construct the test statistics, which we label as the $Z$-test for breaks in the factor variance, and the $W$-test for breaks in the loadings following \Cref{eqn:projection}.
Our tests treat the break fraction as unknown and hence as a matter of notation are expressed as the supremum over $\qty[\pi_1, \pi_2]$ where $0 < \pi_1 < \pi < \pi_2 < 1$. 
We present the $Z$-test statistic for changes in the factor variance $\mathcal{H}_0: \Sigma_F = Z \Sigma_F Z^\tran$ across time as
\begin{align}
\label{eqn:Z_test_stats}
\suppi \mathscr{W}_Z\left( \pi, \widehat{F} \right) = 
\suppi A_Z \left( \pi, \widehat{F} \right)^\tran \widehat{S}_Z \left( \pi, \widehat{F} \right)^{-1} A_Z \left( {\pi}, \widehat{F} \right),
\end{align}
where 
\begin{align}
A_Z \left( {\pi}, \widehat{F} \right) = 
\operatorname{vech}\left( 
	\sqrt{T} \left( \frac{1}{\floor{{\pi} T}} \sum_{t = 1}^{\floor{{\pi} T}} \widehat{f}_t \widehat{f}_t^\tran - \frac{1}{T - \floor{{\pi} T}} \sum_{t = \floor{{\pi}T + 1}}^T \widehat{f}_t \widehat{f}_t^\tran \right)
\right)
\end{align} denotes the difference in the subsample means of the second moments (outer product) process of $\widehat{f}_t$, and $\operatorname{vech}()$ denotes the column-wise vectorization of a square matrix with the upper triangle excluded. Its long run variance is estimated as a weighted average of the variance from pre- and post-break data
\begin{align}
\widehat{S}_Z\left( {\pi}, \widehat{F} \right) = 
\frac{1}{{\pi}} \widehat{\Omega}_{Z, (1)} \left({\pi}, \widehat{F} \right) + \frac{1}{1 - {\pi}} \widehat{\Omega}_{Z, (2)}\left( {\pi}, \widehat{F} \right),
\end{align}
where $\widehat{\Omega}_{Z, (1)}\left({\pi}, \widehat{F} \right)$ and $\widehat{\Omega}_{Z, (2)} \left({\pi}, \widehat{F} \right)$ are HAC estimators constructed using the respective subsamples of $\operatorname{vech} \left(\widehat{f}_t \widehat{f}_t^\tran - I_r \right)$. 
Alternatively, a bootstrap based procedure developed for factor models such as \cite{djogbenou_bootstrap_2015} or \cite{goncalves_bootstrapping_2020} can be entertained.
The statistic $\suppi \mathscr{W}_Z\left( {\pi}, \widehat{F} \right)$ is a sup-Wald test for whether the subsample means of the second moments process of $\widehat{f}_t$ are the same pre- and post-break. 
We note our test statistic appears similar to that of work such as \cite{han_tests_2015} and is indeed identical when $W = \mathbf{0}$; however, a key difference is that we construct our statistic by allowing for $W \neq \mathbf{0}$.

Next, we present the $W$-test statistics for changes in the loadings $\mathcal{H}_0: W = \mathbf{0}$ across time. Defining $\tilde{w}_i$ as the $i$th row of $\tilde{W}$, the test for the $i$th series and its variance are
\begin{align}
\label{eqn:w_ind_test}
\suppi \mathscr{W}_{W, i} \left( \pi, \tilde{w}_i \right) 
&= 
\suppi T \tilde{w}_i^\tran \tilde{\Omega}_{W, i}({\pi}, \tilde{w}_i)^{-1} \tilde{w}_i, \\
\tilde{\Omega}_{W, i} \left( {\pi}, \tilde{w}_i \right) 
&= 
\frac{1}{{\pi}}\tilde{\Theta}_{1, i} \left( {\pi}, \tilde{w}_i \right) 
+ \frac{1}{1 - {\pi}} \tilde{\Theta}_{2, i} \left({\pi}, \tilde{w}_i \right),
\end{align}
which uses pre- and post-break HAC estimates of the asymptotic variance $\tilde{\Theta}_{1, i} \left( {\pi}, \tilde{w}_i \right)$ and $\tilde{\Theta}_{2, i} \left( {\pi}, \tilde{w}_i \right)$, respectively.
%
Next, define the joint statistic for all variables as:
\begin{align}
\label{eqn:w_joint_test}
\suppi \mathscr{W}_{W} \left( \pi, \tilde{w}_i \right) 
= \suppi 
(TN) \left( \frac{\sumN \tilde{w}_i}{N}\right)^\tran \tilde{\Omega}_{W} \left( {\pi}, \tilde{w}_i \right)^{-1} \left( \frac{\sumN \tilde{w}_i}{N}\right),
\end{align}
where the matrix $\tilde{\Omega}_{W} \left( {\pi}, \tilde{w}_i \right) = N^{-1} \sumN \tilde{\Omega}_{W, i} \left( {\pi}, \tilde{w}_i \right)$ is an estimate of the joint variance. 
Both the $Z$-test and $W$-tests are Wald tests; they hence have conventional $\chi^2$ critical values in the case of a known break fraction, or supremum-type critical values provided by \cite{andrews_tests_1993} in the case of an unknown break fraction.

We emphasize that our test statistics do not maintain any assumptions on their counterparts; that is, the test for $\Sigma_F = Z \Sigma_F Z^\tran$ holds irrespective of $W$, and conversely the test for $W = \mathbf{0}$ holds irrespective of $Z$. As both breaks in the factor variance and the loadings could occur simultaneously, this necessitates the practitioner to run both the $Z$-test and $W$-test to pin down the source(s) of the break. 
One can apply a straightforward Bonferroni-Holm correction to control the family wise error rate, which we do in our Monte Carlo study and empirical application. 
Finally, although we have adopted the supremum notation of the structural break literature, we note that in practice, our tests can be operationalized by simply plugging in some consistent estimate $\widehat{\pi}$. 
Thus, it is computationally much simpler, yet asymptotically identical, to simply use $\mathscr{W}_Z \left(\widehat{\pi}, \widehat{F}\right)$, $\mathscr{W}_{W, i}\left(\widehat{\pi}, \tilde{w}_i\right)$, and $\mathscr{W}_{W}\left(\widehat{\pi}, \tilde{W}\right)$, rather than manually calculating the supremum (more discussion is deferred to \Cref{sec:asym:operation}.)

\section{Asymptotic Theory}
\label{sec:asym}
We discuss the asymptotic theory underpinning the test statistics discussed in \Cref{sec:break_tests:statistics}. 
All limits are taken as both $N$ and $T$ tend to infinity simultaneously, and $\delta_{NT}$ is defined as $\min\left(\sqrt{T}, \sqrt{N} \right)$. 
For notation, 
$\norm{\cdot}$ denotes the Frobenius norm of a vector or matrix, 
$\convp$ and $\convd$ denote convergence in probability and distribution, respectively, 
$\Rightarrow$ denotes weak convergence of stochastic processes, 
$\floor{\cdot}$ denotes the floor operator, 
$M$ denotes generic finite constants, 
$tr(A)$ denotes the trace of a square matrix $A$, 
and $A^{-\tran}$ denotes the inverse transpose of any invertible matrix $A$.
\subsection{Estimation}
\label{sec:asym:estimation}
We first establish the properties of the estimated rotational and orthogonal shift components by making the following assumptions. Let $\iota_{1t} \equiv \indicator{t \leq \floor{\pi T}}$ and $\iota_{2t} \equiv \indicator{t \geq \floor{\pi T} + 1}$. 
\begin{assump}
\label{ass:1_factors}
$E\norm{f_t}^4 \lt \infty$, $E \left(f_t f_t^\tran \right) = \Sigma_F$ and $\fracT \sumT f_t f_t^\tran \convp \Sigma_F$ for some $\Sigma_F > 0$.
\end{assump}
\begin{assump}
\label{ass:2_loadings}
For $m = 1, 2$, 
$E \norm{\lambda_{m, i}}^4 \leq M$, $\norm{ \Lambda_m^{ \tran} \Lambda_m/N - \Sigma_{{\Lambda_m}}} \convp 0$ for some $\Sigma_{\Lambda_m} > 0$, and $\norm{\Lambda_{m}^\tran \Lambda_m / N - \Sigma_{\Lambda_m}} = O_p\left( N^{-1/2}\right) $. 
The shift break is orthogonal such that $\Lambda_1^\tran W = \mathbf{0}$.
\end{assump}
\begin{assump}
\label{ass:3_errors}
For all $N$ and $T$:
\begin{assumpenum}
\item \label{ass:3_errors:1} 
$E \qty(e_{it}) = 0, E \abs{e_{it}}^8 \leq M$.
\item 
$E \qty(e_s^\tran e_t /N) = E \qty(N^{-1} \sumN e_{is}e_{it}) = \gamma_{N}(s, t)$, $\left| \gamma_{N}(s, s) \right| \leq M $ for all $s$, and \\
$T^{-1}\sumT \sumTs \left| \gamma_{N}(s, t) \right| \leq M$.
\item \label{ass:3_errors:2} 
$E \qty(e_{it}e_{jt}) = \tau_{ij, t}$, with $\left| \tau_{ij, t} \right| \lt \tau_{ij}$ for some $\tau_{ij}$ and for all $t$. In addition, \\
$N^{-1} \sumN \sumNj \left| \tau_{ij}\right| \leq M$.
\item \label{ass:3_errors:3} 
$E \qty(e_{it}e_{js}) = \tau_{ij, ts}$, and $(NT)^{-1} \sumN \sumNj \sumT \sumTs \left| \tau_{ij, ts} \right| \leq M$.
\item \label{ass:3_errors:4} 
For every $(t, s)$, $E\left| N^{-1/2} \sumN \qty[e_{is}e_{it} - E \qty(e_{is}e_{it})]\right|^4 \leq M$.
\end{assumpenum}
\end{assump}
\begin{assump}
\label{ass:4_ind_groups}
For $m = 1, 2$, $\left\lbrace \lambda_{m, i} \right\rbrace$, $\left\lbrace f_t \right\rbrace $ and $\left\lbrace e_{it}\right\rbrace $ are  mutually independent groups.
\end{assump}
\begin{assump}
\label{ass:5_error_corr}
For all $T$ and $N$:
\begin{assumpenum}
\item \label{ass:5_error_corr:1} 
$\sumTs \abs{\gamma_{N}(s, t)} \leq M$,
\item \label{ass:5_error_corr:2}
$\sumNk \abs{\tau_{ki}} \leq M$.
\end{assumpenum}
\end{assump}
\begin{assump}
\label{ass:6_moments}
For all $N, T$ and $m = 1, 2$:
\begin{assumpenum}
\item \label{ass:6_moments:1} 
For each $t$, 
$E \norm{\frac{1}{\sqrt{NT}} \sumTs \sumNk f_s \qty[e_{ks} e_{kt} - E(e_{ks} e_{kt})] \cdot \iota_{ms} }^2 \leq M$,
\item \label{ass:6_moments:1a}
For each $i$, 
$E \norm{\frac{1}{\sqrt{NT}} \sumTs \sumNk \lambda_{m, k} \qty[e_{kt} e_{it} - E(e_{kt} e_{it})] \cdot \iota_{ms} }^2 \leq M$, 
\item \label{ass:6_moments:2} 
$E \norm{\frac{1}{\sqrt{NT}} \sumT \sumNk f_t \lambda_{m, k}^\tran e_{kt} \cdot \iota_{mt} }^2 \leq M$,
\item \label{ass:6_moments:3}
For each $t$
$E \norm{\frac{1}{\sqrt{N}} \sumN \lambda_{m, i} e_{it}}^4 \leq M$.
\end{assumpenum}
\end{assump}
\begin{assump}
\label{ass:7_eigen_distinct}
The eigenvalues of $\qty(\Sigma_{\Lambda_1} \Sigma_F)$ and $\qty(\Sigma_{\Lambda_2} \Sigma_F)$ are distinct.
\end{assump}
\begin{assump}
\label{ass:8_break_fraction}
For any constants $\pi_1$ and $\pi_2$ satisfying $0 < \pi_1 \leq \pi \leq \pi_2 < 1$, 
\begin{assumpenum}
\item \label{ass:8_break_fraction:1}
$
\suppi
\norm{\frac{1}{\sqrt{NT}} \sumTfloor \sumNk f_t \lambda_{m, k}^\tran e_{kt}
	\cdot  \iota_{mt}}^2 = \Op{1}$, and \\
$\suppi \norm{\frac{1}{\sqrt{NT}} \sumTfloort \sumNk f_t \lambda_{m, k}^\tran e_{kt} 
	\cdot \iota_{mt}}^2 = \Op{1}
$,
for $m = 1, 2$; and
\item \label{ass:8_break_fraction:2}
$\suppi \norm{\frac{\sqrt{T}}{\floor{\pi T}} \sumTfloor \left( f_t f_t^\tran - \Sigma_F \right)} = \Op{1}$, and \\
$\suppi \norm{\frac{\sqrt{T}}{T - \floor{\pi T}} \sumTfloort \left( f_t f_t^\tran - \Sigma_F \right)} = \Op{1}$.
\end{assumpenum}
\end{assump}
\begin{assump}
\label{ass:9:hac_conditions}
The Bartlett kernel of \cite{newey_simple_1987} is used where:
\begin{assumpenum}
\item \label{ass:9:hac_conditions:1}
The bandwidths $b_T, b_{\floor{\pi T}}$, and $b_{T - \floor{\pi T}}$ are $O \qty(T^{1/3})$; and
\item \label{ass:9:hac_conditions:2}
$\sqrt{T} / N \to 0$, and $b_T /\delta_{N, T} \to 0$ as $N, T \to \infty$.
\end{assumpenum}
\end{assump}
\Cref{ass:1_factors,ass:2_loadings,ass:3_errors,ass:4_ind_groups,ass:5_error_corr,ass:6_moments,ass:7_eigen_distinct} are either straight from or slight modifications of those in \cite{bai_inferential_2003}. 
\Cref{ass:1_factors} is  Assumption A in \cite{bai_inferential_2003}, except that we require the second moment of $f_t$ to be time invariant. This additional ``strict'' stationarity assumption is a common identification condition \citep[e.g.][and others]{han_tests_2015,baltagi_identification_2017} which limits the factors to exhibit no heteroscedasticity, but this is not restrictive in our case as changes in $\Sigma_F$ are captured by $Z$.
\Cref{ass:2_loadings} is Assumption B in \cite{bai_inferential_2003}, except that it specifies the convergence speed of $\Lambda_m^\tran \Lambda_m/N$ to be no slower than $N^{-1/2}$ for $m = 1, 2$.
\Cref{ass:2_loadings} additionally specifies that $\Lambda_1$ and $W$ are  strictly orthogonal, and is necessary to prevent the cross term $\Lambda_1^\tran W$ from acting as a potential source of ``rotational'' change \cite[see][who similarly need to assume orthogonality]{wang_identification_2021,pelger_state-varying_2022,massacci_testing_2021}.
\Cref{ass:3_errors} allows for weak serial and cross-sectional correlation and corresponds to Assumption C of \cite{bai_inferential_2003}. 
\Cref{ass:4_ind_groups} is standard in the factor modeling literature, and is the subsample version of Assumption D of \cite{bai_confidence_2006}. 
\Cref{ass:5_error_corr} strengthens \Cref{ass:3_errors}, and corresponds to Assumption E in \cite{bai_inferential_2003}. 
\Cref{ass:6_moments} are Assumptions F1-F2 of \cite{bai_inferential_2003}. Although we require \Cref{ass:6_moments} which are moment conditions in \cite{bai_inferential_2003}, asymptotic normality of $N^{-1/2} \sumN \lambda_{m, i} e_{it}$ are not required for estimation. 
\Cref{ass:6_moments:3} is slightly stronger than Assumption F3 of \cite{bai_inferential_2003}, which only requires the existence of the second moments. 
\Cref{ass:7_eigen_distinct} corresponds to Assumption G in \cite{bai_inferential_2003}. 
\Cref{ass:8_break_fraction} requires that there is infinite data pre- and post-break, corresponding to of Assumption 8 in \cite{han_tests_2015}. 
\Cref{ass:1_factors,ass:2_loadings,ass:3_errors,ass:4_ind_groups,ass:5_error_corr,ass:6_moments,ass:7_eigen_distinct,ass:8_break_fraction} suffice to establish the consistency of the subsample principal components estimates. Specifically, $\tilde{F}_1$ and $\tilde{F}_2$ are estimates of $F_1 H_1$ and $F_2 H_2$, where $H_1$ and $H_2$ are the pre- and post-break normalization bases with probability limits $H_{0, 1}$ and $H_{0, 2}$, respectively.
Using these results, the consistency of $\tilde{Z}$ and $\tilde{W}$ then follow, which are the building blocks for the $Z$- and $W$-tests.
\Cref{ass:9:hac_conditions} specifies conditions for the Bartlett kernel used in HAC estimation, where in order to eliminate the effects of factor estimation, we require $b_T / \delta_{N, T} \to 0$ in addition to the typical $\sqrt{T} / N \to 0$ rate.

\subsection{$Z$-test for Rotational Changes}
\label{sec:asym:z_test}
We analyze the $Z$-test for the null of no break in the factor variance $\mathcal{H}_0: \Sigma_F = Z \Sigma_F Z^\tran$ across time that holds regardless of $W$, and is therefore robust to breaks in the factor loadings. 
We define 
$\mathscr{W}_Z \left(\pi, FH_{0, 1} \right) = A_Z \left(\pi, FH_{0, 1} \right)^\tran \widehat{S}_Z \left( \pi, FH_{0, 1} \right)^{-1} A_Z \left(\pi, FH_{0, 1} \right)$ as the infeasible analog of 
$\mathscr{W}_Z\left( \pi, \widehat{F} \right)$,
and make the following assumptions.
\begin{assumpZ}
\label{assZ:1:HAC_assump}
\begin{assumpZenum} 
\item \label{assZ:1:HAC_assump:1}
$\Omega_Z = \limT \operatorname{Var} \left( \operatorname{vech}\left( \frac{1}{\sqrt{T}} \sumT H_{0, 1}^{\tran} f_t f_t^\tran H_{0, 1} - I_r \right) \right)$ is positive definite, and $\norm{\Omega_Z} < \infty$. Its estimators $\widehat{\Omega}_{Z, (m)} \left(\pi, FH_{0, 1} \right)$ for $m = 1, 2$ are consistent such that $\norm{\widehat{\Omega}_{Z, (m)} \left(\pi, FH_{0, 1} \right) - \Omega_Z} = o_p(1)$,
\item \label{assZ:1:HAC_assump:2} 
$\mathscr{W}_Z \left(\pi, FH_{0, 1} \right) \Rightarrow Q_p(\pi)$, 
$\suppi \mathscr{W}_Z \left(\pi, FH_{0, 1} \right) \Rightarrow \suppi Q_p(\pi)$, 
where \\
$Q_p(\pi) = \left[ B_p(\pi) - \pi B_p(1) \right]^\tran \left[ B_p (\pi) - \pi B_p(1) \right]/\left(\pi \qty(1 - \pi) \right)$, and $B_p(\cdot)$ is a $p = r(r + 1)/2$ vector of independent Brownian motions on $[0, 1]$ restricted to $[\pi_1, \pi_2] \subset (0, 1)$. 
\end{assumpZenum}
\end{assumpZ}
\Cref{assZ:1:HAC_assump:1} is a standard HAC assumption regarding the convergence of the infeasible estimators 
$\widehat{\Omega}_{Z, (1)}\left(\pi, FH_{0, 1} \right)$ and  $\widehat{\Omega}_{Z, (2)} \left(\pi, FH_{0, 1} \right)$ to their population counterpart $\Omega_Z$.
\Cref{assZ:1:HAC_assump:2} is the main result of Theorem 3 of \cite{andrews_tests_1993}, and is a necessary high-level assumption to establish the asymptotic distributions of the test statistics. 
It has been used in \cite{han_tests_2015}, for which \cite{chen_detecting_2014} provides more primitive assumptions which are satisfied for a large class of ARMA processes. 
\begin{theorem}
\label{thm:6:Z_Wald_consistency}
Under  \Cref{ass:1_factors,ass:2_loadings,ass:3_errors,ass:4_ind_groups,ass:5_error_corr,ass:6_moments,ass:7_eigen_distinct,ass:8_break_fraction,ass:9:hac_conditions,assZ:1:HAC_assump}, 
then 
$\suppi \mathscr{W}_Z\left( \pi, \widehat{F} \right) \convd \suppi Q_p(\pi)$.
\end{theorem}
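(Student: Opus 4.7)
The overall plan is to reduce the feasible Wald statistic to its infeasible analog $\mathscr{W}_Z(\pi, FH_{0,1})$, to which \Cref{ass:10:HAC_assump:2} delivers the $\chi^2_{r(r+1)/2}$ limit (at any fixed $\pi$, $Q_p(\pi) = [B_p(\pi) - \pi B_p(1)]^\tran[B_p(\pi) - \pi B_p(1)]/[\pi(1-\pi)]$ is a normalized squared Brownian bridge and hence $\chi^2_{r(r+1)/2}$-distributed), and then to conclude via Slutsky's theorem.

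The first step exploits the exact PCA normalization $\tilde F_m^\tran \tilde F_m/T_m = I_r$ for $m = 1, 2$. Because $\widehat f_t = \tilde f_{1,t}$ on the first partition and $\widehat f_t = \tilde Z \tilde f_{2,t-T_1}$ on the second, the subsample averages of $\widehat f_t \widehat f_t^\tran$ equal $I_r$ and $\tilde Z \tilde Z^\tran$ respectively, collapsing the numerator to the exact identity
\[
A_Z(\pi, \widehat F) = vech\!\bigl(\sqrt{T}\,(I_r - \tilde Z \tilde Z^\tran)\bigr).
\]
The second step linearizes $\tilde Z \tilde Z^\tran$: by \cref{thm:1:Ztilde_consistency}, $\tilde Z = H_1^\tran Z H_2^{-\tran} + O_p(\delta_{NT}^{-2})$, and the rate restriction $\sqrt{T}/N \to 0$ forces $\sqrt T/\delta_{NT}^2 = o(1)$, so this remainder contributes only $o_p(1)$ to $A_Z$ after scaling by $\sqrt T$. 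Using the algebraic identities $H_{0,m}^{-\tran} H_{0,m}^{-1} = \Sigma_F$ (which follow directly from $H_{0,m} = \Sigma_{\Lambda_m}^{1/2}\Upsilon_m V_m^{-1/2}$ and $\Upsilon_m V_m \Upsilon_m^\tran = \Sigma_{\Lambda_m}^{1/2}\Sigma_F \Sigma_{\Lambda_m}^{1/2}$) together with the null $Z\Sigma_F Z^\tran = \Sigma_F$, the leading term $H_{0,1}^\tran Z H_{0,2}^{-\tran}H_{0,2}^{-1}Z^\tran H_{0,1} = H_{0,1}^\tran \Sigma_F H_{0,1}$ equals $I_r$, cancelling the $O(1)$ deterministic piece. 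Expanding the remaining stochastic piece at order $T^{-1/2}$ and collecting pre- and post-break contributions should produce a weighted difference of subsample averages of $vech\bigl(H_{0,1}^\tran f_t f_t^\tran H_{0,1} - I_r\bigr)$, which coincides with $A_Z(\pi, FH_{0,1})$ up to $o_p(1)$.

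For the denominator, the same expansion delivers $\widehat f_t \widehat f_t^\tran - I_r = H_{0,1}^\tran (f_t f_t^\tran - \Sigma_F) H_{0,1} + o_p(1)$ uniformly in $t$, so the kernel-weighted HAC $\widehat S_Z(\pi, \widehat F)$ inherits consistency from its infeasible counterpart by standard arguments under the Bartlett bandwidth conditions of \cite{newey_simple_1987}, yielding $\widehat S_Z(\pi, \widehat F) \convp \Omega_Z/(\pi(1-\pi))$ via \Cref{ass:9:hac_conditions} and \Cref{ass:10:HAC_assump:1}. Combining this with the numerator equivalence and \Cref{ass:10:HAC_assump:2} gives $\mathscr{W}_Z(\pi, \widehat F) \convd \chi^2_{r(r+1)/2}$. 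The main technical obstacle is the second step of the numerator expansion: both $H_1$ and $H_2$ are random through $\Lambda_m^\tran\Lambda_m/N$ and $F_m^\tran \tilde F_m/T_m$, so one must carefully verify that the $O_p(N^{-1/2})$ loading-side fluctuations do not survive multiplication by $\sqrt T$, and that the $O_p(T^{-1/2})$ factor-side fluctuations aggregate into precisely the infeasible numerator. These cancellations are driven by the null identity $Z\Sigma_F Z^\tran = \Sigma_F$ invoked at several places, with $\sqrt T/N \to 0$ ensuring the residual cross-interactions are asymptotically negligible.
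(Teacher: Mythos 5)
Your overall architecture---reduce $\mathscr{W}_Z(\pi,\widehat F)$ to the infeasible $\mathscr{W}_Z(\pi,FH_{0,1})$, invoke \cref{ass:10:HAC_assump:2} for the $\chi^2_{r(r+1)/2}$ limit, and close with Slutsky---is exactly the paper's, and your denominator argument is essentially the paper's HAC-transfer chain (\cref{lem:thm:5:Z_var_consistency} and \cref{lem:Z_wald_lem}). Where you genuinely depart is the numerator: the observation that the PCA normalizations collapse $A_Z(\pi,\widehat F)$ to the exact identity $vech\bigl(\sqrt{T}(I_r-\tilde Z\tilde Z^\tran)\bigr)$ is correct and is not exploited in the paper, which instead bounds $\sqrt{T}\,\norm{T_m^{-1}\sum_t(\widehat f_t\widehat f_t^\tran-H_1^\tran f_tf_t^\tran H_1)}$ term by term using the factor-estimation-error lemmas. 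Your route is attractive, but as written it has a concrete gap precisely at the step you call ``the main technical obstacle.''

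The gap is this: after linearizing $\tilde Z$ around $H_1^\tran ZH_2^{-\tran}$ (which is fine, since $\sqrt{T}/\delta_{NT}^2\to 0$ under $\sqrt{T}/N\to 0$), you must control $\sqrt{T}\bigl(I_r-H_1^\tran ZH_2^{-\tran}H_2^{-1}Z^\tran H_1\bigr)$, and you propose to do so by replacing $H_m$ with $H_{0,m}$ and cancelling the deterministic piece. But $\norm{H_m-H_{0,m}}=\Opdelta$, with a loading-driven component of order $N^{-1/2}$, and $\sqrt{T}\cdot N^{-1/2}=\sqrt{T/N}$ does \emph{not} vanish under $\sqrt{T}/N\to 0$ (take $T=N^{3/2}$); so the claim that the loading-side fluctuations ``do not survive multiplication by $\sqrt{T}$'' is false as a pure rate argument. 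The paper never faces this because in its decomposition the $H_1-H_{0,1}$ discrepancy only multiplies the \emph{difference} of the two subsample means of $f_tf_t^\tran$, which is $O_p(T^{-1/2})$ by \cref{ass:8_break_fraction:2}, giving $\sqrt{T}\cdot o_p(1)\cdot O_p(T^{-1/2})=o_p(1)$; your expression has no such difference structure. The fix is a cancellation, not a rate: from $\tilde F_m^\tran\tilde F_m/T_m=I_r$ together with \cref{lem:bai_inf:1:A1,lem:bai_inf:3:B2} one obtains $H_m^\tran(F_m^\tran F_m/T_m)H_m=I_r+\Opdeltasq$, i.e.\ $H_m^{-\tran}H_m^{-1}=F_m^\tran F_m/T_m+\Opdeltasq$, so the $O_p(N^{-1/2})$ loading fluctuations drop out of the product $H_m^{-\tran}H_m^{-1}$ exactly, leaving only the $O_p(T^{-1/2})$ factor-side fluctuation. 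With that identity, $\sqrt{T}(I_r-\tilde Z\tilde Z^\tran)=\sqrt{T}H_1^\tran\bigl(F_1^\tran F_1/T_1-ZF_2^\tran F_2Z^\tran/T_2\bigr)H_1+o_p(1)$, the bracket is $O_p(T^{-1/2})$ under the null $Z\Sigma_FZ^\tran=\Sigma_F$, and the substitution $H_1\mapsto H_{0,1}$ is then harmless. You should supply this mechanism explicitly; note also that the resulting second-subsample fluctuation is $Z(F_2^\tran F_2/T_2-\Sigma_F)Z^\tran$ rather than $F_2^\tran F_2/T_2-\Sigma_F$, so the matching to the literal $A_Z(\pi,FH_{0,1})$ requires a word about why \cref{ass:10:HAC_assump} applies to the rotated process $[F_1^\tran, ZF_2^\tran]^\tran$ (a point the paper's own proof also passes over).
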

\Cref{thm:6:Z_Wald_consistency} shows that the sup-$Z$ test statistic converges to a squared standardized tied down Bessel Process, for which simulated critical values can be found in \cite{andrews_tests_1993}. 
To ensure the $Z$-test's power under the alternative, we make the following assumptions.
\begin{assumpZ}
\label{assZ:2:Z_test_alter}
\begin{assumpZenum}
\item 
\label{assZ:2:Z_test_alter:1}
$\norm{Z} \lt \infty$, and $Z \Sigma_F Z^\tran \neq \Sigma_F$.
\item 
\label{assZ:2:Z_test_alter:2}$
\plimT \operatorname{inf} 
\left( \operatorname{vech}(C)^\tran \left[ \operatorname{max} \left( b_{\floor{\pi T}}, b_{T - \floor{\pi T}} \right) \widehat{S}_Z 
\left(\pi, F^* H_{0, 1} \right)^{-1} \right] \operatorname{vech}(C) \right) \gt 0$, 
where $F^* = \left[ F_1^\tran, Z F_2^\tran \right]^\tran$, and $C \equiv H_{0, 1}^\tran \qty(\Sigma_F - Z \Sigma_F Z^\tran) H_{0, 1}$.
\end{assumpZenum}
\end{assumpZ}
\Cref{assZ:2:Z_test_alter:1} formalizes the definition of a break in factor variance and is commonly assume to rule out the unlikely scenario where $Z = -I_r$, i.e. all the loadings switch their signs after the break  \cite[see][and others]{han_tests_2015,baltagi_identification_2017,baltagi_estimating_2021}.
\Cref{assZ:2:Z_test_alter:2} regulates the asymptotics of the variance of the statistics under the alternative. Together, these ensure that the subsample means of $\widehat{f}_t \widehat{f}_t^\tran$ converge to different limits, and the consistency of $\mathscr{W}_Z\left( \pi, \widehat{F} \right)$, as summarized in the following theorem.
\begin{theorem}
\label{thm:7:z_alter}
Under  \Cref{ass:1_factors,ass:2_loadings,ass:3_errors,ass:4_ind_groups,ass:5_error_corr,ass:6_moments,ass:7_eigen_distinct,ass:8_break_fraction,ass:9:hac_conditions,assZ:2:Z_test_alter}, 
$\suppi \mathscr{W}_Z\left( \pi, \widehat{F} \right) \to \infty$.
\end{theorem}

\subsection{$W$-test for Orthogonal Shifts}
\label{sec:asym:w_test}
Next, we analyze the $W$-test for breaks the loadings $\mathcal{H}_0: W = \mathbf{0}$ across time that holds regardless of $Z$, and is therefore robust to changes in the factor variance. 
First note that because $W$ is an $N \times r$ matrix where $N \to \infty$, traditional tests are infeasible. Alternative popular approaches such as Bonferroni test statistics and pooling individual test statistics can suffer from significant size distortions, whereas directly testing for a change in the number of factors requires much stricter assumptions on the error term, \cite[see][]{stock_implications_2005,han_tests_2015-1}. 
This motivates us to formulate an individual test statistic $\mathscr{W}_{W, i} \left(\pi, \tilde{w}_i \right)$ for each $i$, and a joint test statistic $\mathscr{W}_W \left(\pi, \tilde{w}_i \right)$ pooled across $N$ to overcome the infinite dimension problem, the latter of which is akin to testing whether $\fracrootN \sumN w_i$ is large, as opposed to ``small'' as defined by \cite{bates_consistent_2013}. 
To analyze the test statistics, we define the infeasible analogs of these statistics as 
$\mathscr{W}_{W, i} \left(\pi, H_{0, 2}^{-1} w_i \right)$ and 
$\mathscr{W}_W \left(\pi, H_{0, 2}^{-1} w_i \right)$, respectively, and make the following additional assumptions. 
\begin{assumpW}
\label{assW:1:sum_loadings}
For $m = 1, 2$:
\begin{assumpWenum}
\item \label{assW:1:sum_loadings:1}
$\frac{1}{\sqrt{N}} \sumN \lambda_{m, i} = \Op{1}$,
\item \label{assW:1:sum_loadings:2}
For each $t$, 
$E \norm{\frac{1}{\sqrt{N}} \sumN \lambda_{m, i} e_{it}^2}^2 \leq M$,
\item \label{assW:1:sum_loadings:3}
$E\norm{\frac{1}{N \sqrt{T}} \sumT \sum_{k \neq i} \sumN \lambda_{m, k} e_{kt} e_{it} \cdot \iota_{mt}}^2 \leq M$.
\end{assumpWenum}
\end{assumpW}
\begin{assumpW}
\label{assW:2:W_test_error_assump}
For all $N, T$,
\begin{assumpWenum}
\item \label{assW:2:W_test_error_assump:1}
For each $t$, $E \qty(\fracrootN \sumN e_{it})^2 \leq M$. 
\end{assumpWenum}
\end{assumpW}
\begin{assumpW}
\label{assW:3}
For all $N, T$, and $m = 1, 2$:
\begin{assumpWenum}
\item \label{assW:3:3}
For each $i$, $E \norm{\frac{1}{\sqrt{T}} \sumT f_t e_{it} \cdot \iota_{mt}}^4 \leq M$,
\item \label{assW:3:4} 
$E \norm{
\frac{1}{\sqrt{N T}} \sumT \sumN f_t e_{it} \cdot \iota_{mt}
}^2 \leq M$.
\end{assumpWenum}
\end{assumpW}
\begin{assumpW}
\label{assW:4:clt}
\begin{assumpWenum}
\item \label{assW:4:clt:1} 
$\mathscr{W}_{W, i} \left( \pi, H_{0, 2}^{-1} w_i \right) \Rightarrow Q_{r}(\pi)$, $\suppi \mathscr{W}_{W, i} \left( \pi, H_{0, 2}^{-1} w_i \right) \Rightarrow \suppi Q_{r}(\pi)$,
\item \label{assW:4:clt:2}
$\mathscr{W}_{W} \left( \pi, H_{0, 2}^{-1} w_i \right) \Rightarrow Q_{r}(\pi)$, $\suppi \mathscr{W}_{W} \left( \pi, H_{0, 2}^{-1} w_i \right) \Rightarrow \suppi Q_{r}(\pi)$. 
\end{assumpWenum}
\end{assumpW}
\Cref{assW:1:sum_loadings} is required to bound the sum of the loadings by $\Op{\sqrt{N}}$, and is a slightly modified version of the Assumption 7 in \cite{han_tests_2015-1}. This will hold if the loadings are centered around zero, and their sum diverges at the rate of $\sqrt{N}$ by the central limit theorem (CLT). Although somewhat stricter than a conventional factor model setup, it seems to hold for empirically used datasets.
\Cref{assW:2:W_test_error_assump} is the pooled version of \Cref{ass:3_errors}. 
\Cref{assW:3:3,assW:3:4} strengthen \Cref{ass:6_moments:1},
corresponding to Assumptions 6 e) and 6 f) of \cite{han_tests_2015-1}, and are not restrictive because they involve zero mean random variables. 
\Cref{assW:4:clt:1,assW:4:clt:2} are high-level assumptions to establish the asymptotic distributions of the infeasible test statistics, where the former is a functional CLT equivalent of Assumption F of \cite{bai_inferential_2003}, and the latter is a cross-sectionally averaged version.\footnote{These can be satisfied by a wide range of processes detailed in \cite{andrews_tests_1993} and \cite{bai_estimating_1998}, and follow by recognizing that $H_{0, 2}^{-\tran} w_i$ corresponds to the structural break coefficients on $\qty[0, F_2^\tran]^\tran$, the additional columns in the pseudo-factors which arise only under the alternative. } 
This somewhat restricts the cross-sectional correlation in $e_{it}$, but can be optionally handled by the CS-HAC estimator of \cite{bai_confidence_2006}. 
\begin{theorem}
\label{thm:W_test_distribution}
Under \Cref{ass:1_factors,ass:2_loadings,ass:3_errors,ass:4_ind_groups,ass:5_error_corr,ass:6_moments,ass:7_eigen_distinct,ass:8_break_fraction,ass:9:hac_conditions} and \Cref{assW:2:W_test_error_assump,assW:3}, 
\begin{thmenum}
\item \label{thm:W_test_distribution:1}
If \Cref{assW:4:clt:1} also holds,
$\suppi \mathscr{W}_{W, i} (\pi, \tilde{w}_i) \convd \suppi Q_r(\pi)$ for each $i$, and
\item \label{thm:W_test_distribution:2} 
If \Cref{assW:1:sum_loadings,assW:4:clt:2} also hold,
$\suppi \mathscr{W}_{W} (\pi, \tilde{w}_i) \convd \suppi Q_r(\pi)$.
\end{thmenum}
\end{theorem}
\Cref{thm:W_test_distribution} establishes the asymptotic distribution of the $W$-test statistics.
To establish the consistency of the joint $W$-test, we make the following assumption.
\begin{assumpW}
\label{assW:5:W_test_alter_assump}
There exist constants $0 < \alpha \leq 0.5$ and $C > 0$ such that as $N, T \to \infty$,
$ 
\operatorname{Pr} \left( \norm{\frac{T^{\alpha/2}}{\sqrt{N}} \sumN w_i} > C\right) \to 1
$.
\end{assumpW}
\Cref{assW:5:W_test_alter_assump} is similar to Assumption 9 of \cite{han_tests_2015-1}, and requires $\norm{\frac{T^{\alpha/2}}{\sqrt{N}} \sumN w_i}$ to be bounded away from zero asymptotically. 
Note that if $N^{-1} \sumN w_i \convp 0$ under the alternative, then $N^{-1/2} \sumN w_i$ converges in distribution to some Gaussian random variable by the CLT, and hence $\norm{N^{-1/2 + \epsilon} \sumN w_i}$ is diverging as $N \to \infty$ for any fixed $\epsilon > 0$. 
In order for $\norm{\frac{T^{\alpha/2}}{\sqrt{N}} \sumN w_i}$ to be bounded away from zero, any $\alpha \in (0, 0.5]$ such that $T^{\alpha/2} \geq N^{\epsilon}$ is required, which is not difficult as $N \propto T$ in most applications. \Cref{assW:5:W_test_alter_assump} therefore ensures the consistency of the joint test statistic under the alternative hypothesis, even if $N^{-1} \sumN w_i \convp 0$, as summarized in the following Theorem.
\begin{theorem}
\label{thm:11:W_test_alter_cons}
If 
the alternative $\mathcal{H}_1: W \neq \mathbf{0}$ holds, then as $N, T \to \infty$:
\begin{thmenum}
\item \label{thm:11:W_test_alter_cons:1} 
Under \Cref{ass:1_factors,ass:2_loadings,ass:3_errors,ass:4_ind_groups,ass:5_error_corr,ass:6_moments,ass:7_eigen_distinct,ass:8_break_fraction,ass:9:hac_conditions} 
and 
\Cref{assW:2:W_test_error_assump,assW:3}, 
then $\suppi \mathscr{W}_{W, i} \qty(\pi, \tilde{w}_i) \to \infty$, 
if $w_i \neq 0$,
\item \label{thm:11:W_test_alter_cons:2}
Under \Cref{ass:1_factors,ass:2_loadings,ass:3_errors,ass:4_ind_groups,ass:5_error_corr,ass:6_moments,ass:7_eigen_distinct,ass:8_break_fraction,ass:9:hac_conditions} 
and \Cref{assW:2:W_test_error_assump,assW:3,assW:1:sum_loadings,assW:5:W_test_alter_assump}, 
then 
$\suppi \mathscr{W}_{W} \qty(\pi, \tilde{w}_i) \to \infty$, 
if $\frac{\sqrt{N}}{T^{1 - \alpha/2}} \to 0$.
\end{thmenum}
\end{theorem}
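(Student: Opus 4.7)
The plan is to show that under the alternative $W\neq \mathbf 0$, both quadratic forms diverge because their argument vectors concentrate on nonzero ``signal'' components while their normalizing matrices remain $O_p(1)$ and bounded below in probability. For part (a), I would start from the decomposition $\tilde w_i = H_2^{-1} w_i + \xi_i$, which follows by combining \Cref{eqn:W_estimation} with the pointwise PCA expansion of $\tilde\lambda_{1,i}$ and $\tilde\lambda_{2,i}$ already used in the proof of \Cref{thm:W_test_distribution:1}. Under \Cref{ass:17:clt:1} that expansion gives $\sqrt T\,\xi_i = O_p(1)$ irrespective of the value of $w_i$, so $\xi_i = o_p(1)$ provided $\sqrt T/N\to 0$. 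Since $w_i\neq 0$ and $H_2$ is nonsingular (by \Cref{ass:7_eigen_distinct} together with \Cref{ass:2_loadings}), this yields $\|\tilde w_i\|\geq c_0 > 0$ with probability tending to one. I would then argue that $\tilde\Omega_{W,i}$ still converges to a finite positive definite limit under the alternative, because $\tilde F_m$ consistently spans the within-regime factor space irrespective of whether $W=\mathbf 0$, so the residuals entering the HAC construction in \Cref{eqn:w_ind_test} remain consistent. Combining these facts, $\tilde w_i^\tran \tilde\Omega_{W,i}^{-1}\tilde w_i$ is bounded below in probability, and multiplying by $T$ delivers $\mathscr W_{W,i}\to\infty$ at rate $T$.

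For part (b), I would carry out the analogous argument with $\bar{\tilde w}:=N^{-1}\sumN\tilde w_i$ replacing $\tilde w_i$ and the signal being $H_2^{-1}\bar w$, where $\bar w:=N^{-1}\sumN w_i$. Writing $\bar{\tilde w}=H_2^{-1}\bar w+\bar\xi$, I would apply the pooled expansion underlying the proof of \Cref{thm:W_test_distribution:2}: its mean-zero leading term satisfies $\sqrt{TN}\,\bar\xi=O_p(1)$ by \Cref{ass:17:clt:2} under \Cref{ass:15,ass:16:sum_loadings}. \Cref{ass:18:W_test_alter_assump} together with the nonsingularity of $H_2$ gives, with probability tending to one,
\[
\sqrt{TN}\,\|H_2^{-1}\bar w\| \geq c_1 \sqrt T\cdot\sqrt N\,\|\bar w\| \geq c_1 C\,T^{(1-\alpha)/2},
\]
which tends to infinity for any $\alpha\in(0,1/2]$. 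Once the pooled noise is dominated by this signal, the reverse triangle inequality yields $\sqrt{TN}\,\|\bar{\tilde w}\|\to\infty$ in probability, and since $\tilde\Omega_W$ is $O_p(1)$ and bounded below (by the same residual-consistency argument as in part (a), pooled across $i$ using \Cref{ass:14:W_test_error_assump,ass:16:sum_loadings}), this would give $\mathscr W_W=TN\,\bar{\tilde w}^\tran\tilde\Omega_W^{-1}\bar{\tilde w}\to\infty$.

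The hard part will be controlling $\bar\xi$ under the alternative. In the null case the cross-sectional average of mean-zero PCA errors is handled directly by \Cref{ass:17:clt:2}; under the alternative, additional remainder terms in the expansions of $\tilde F_2-F_2H_2$, $\tilde Z-H_1^\tran ZH_2^{-\tran}$, and $\tilde\lambda_{2,i}-H_2^{-1}\lambda_{2,i}$ pick up contributions proportional to $W$ and hence to $\bar w$. Tracking these terms carefully produces a remainder of order $\sqrt N/T^{1-\alpha/2}$, which is precisely the rate condition imposed by the theorem. Under $\sqrt N/T^{1-\alpha/2}\to 0$ these $W$-dependent remainders are $o_p(T^{(1-\alpha)/2})$ and hence dominated by the signal $\sqrt{TN}\,H_2^{-1}\bar w$, which closes the argument.
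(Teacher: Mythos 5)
Your proposal is correct and follows essentially the same route as the paper: the same expansion $\tilde w_i = H_2^{-1}w_i + o_p(1)$ with boundedness of $\tilde\Omega_{W,i}^{-1}$ for part (a), and the same use of \Cref{ass:18:W_test_alter_assump} to bound the pooled signal below while the CLT noise and the $O_p(T^{\alpha/2}/\delta_{NT}^2)$ remainder vanish for part (b) (the paper normalizes by $T^{\alpha}/T$ rather than $\sqrt{TN}$, which is the same argument up to rescaling). Your closing remark correctly locates where the rate condition $\sqrt{N}/T^{1-\alpha/2}\to 0$ enters — in controlling the $W$-dependent remainders of the pooled expansion — which the paper's written proof leaves implicit.
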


\subsection{Operationalizing with An Estimated Break Fraction}
\label{sec:asym:operation}
In practice, instead of manually computing many tests and taking their maximum, these tests can be readily operationalized with some consistent estimate $\widehat{\pi}$ satisfying $\abs{\widehat{\pi} - \pi} = \Op{T^{-1}}$, as this results in an asymptotically identical test statistic \citep[see for example Section 4.4 of][]{bai_estimating_1998}.
In our setting, we are able to show that this condition on $\widehat{\pi}$ is a necessary and sufficient condition to ensure that the quantities $\widehat{F} \left(\widehat{\pi}\right)$ and $\tilde{W} \left(\widehat{\pi}\right)$ that depend on $\widehat{\pi}$ are still consistent at the typical $\Op{\delta_{NT}^{-2}}$ rate, and therefore suffices to establish the asymptotic equivalence of 
$\mathscr{W}_Z \left( \widehat{\pi}, \widehat{F} \left( \widehat{\pi} \right) \right)$, 
$\mathscr{W}_{w, i} \left( \widehat{\pi}, \tilde{w}_i \left(\widehat{\pi} \right) \right)$, and 
$\mathscr{W}_{W} \left( \widehat{\pi}, \tilde{w}_i \left(\widehat{\pi} \right) \right)$, 
to their respective counterparts
$\suppi \mathscr{W}_{Z} \left( \pi, \widehat{F} \right)$, 
$\suppi \mathscr{W}_{w_i} \left( \pi, \tilde{w}_i \right)$, and 
$\suppi \mathscr{W}_{W} \left( \pi, \tilde{w}_i \right)$ (see Section A.5 of the Supplement).
This condition is not restrictive, as there exist many estimators $\widehat{\pi}$ which attain this rate \citep[e.g.][]{baltagi_identification_2017,bai_likelihood_2024}.
\begin{remark}
\label{rem:r_estimation}
The number of factors $r$ in either subsample can be consistently estimated conditional on a consistent estimate of $\pi$ \citep[e.g.][]{bai_determining_2002,onatski_determining_2010,ahn_eigenvalue_2013,baltagi_identification_2017}. 
If the pre- and post-break estimates $\tilde{r}_1$ and $\tilde{r}_2$ differ, this can be accommodated by allowing $\tilde{Z}$ to be $\tilde{r}_1 \times \tilde{r}_2$ (see Section A.6 of the Supplement).
\end{remark}
\section{Monte Carlo Study}
\label{sec:monte_carlo}
\subsection{Simulation Specification}
We focus on $r = 3$ factors, and draw $\lambda_{1i}$ and $\lambda_{2i}^*$ from $N \qty(\mathbf{0}_3, I_3)$ to form $\Lambda_1$ and $\Lambda_2^*$. 
Then, $W$ is generated as $1.5 \times \qty(\Lambda_2^* - (\Lambda_1^\tran \Lambda_1)^{-1} \Lambda_1^\tran \Lambda_2^*)$. 
To simulate a break under the null $\mathcal{H}_0: W = \mathbf{0}$, we set the $\floor{\sqrt{N}}$ to $N$th rows of $W$ to be 0, and leave them unchanged to simulate a break under the alternative, similar to \cite{massacci_testing_2021}.
For the rotation, under the null $\mathcal{H}_0: Z \Sigma_F Z^\tran = I$ we set $Z$ to $I_3$; otherwise we set $Z$ as a lower triangular matrix with $[2.5, 1.5, 0.5]$ on its diagonal and lower triangular entries drawn from $N(0, 1)$ as in \cite{duan_quasi-maximum_2022} to simulate a break under the alternative. 
Each $x_{it}$ is then generated as
\begin{align}
\label{eqn:overall_sim}
x_{it} = 
\begin{cases}
\lambda_{1, i}^\tran f_t + \sqrt{\theta} e_{it}, &\quad t = 1, \dots, \floor{\pi T}, \\
\qty(Z \lambda_{1, i} + w_{i})^\tran f_t + \sqrt{\theta} e_{it}, &\quad t = \floor{\pi T} + 1, \dots, T,
\end{cases}
\end{align}
for $i = 1, \dots, N$ and $t = 1, \dots, T$. The parameter $\theta$ is controlled to set the signal-to-noise ratio to be $\sim 50\%$. The factors and errors are generated as:
\begin{align}
f_{k, t} &= \rho f_{k, t - 1} + \mu_{it}, \quad \mu_{it} \sim i.i.d. \; N \qty(0, 1 - \rho^2), \\
e_{it} &= \alpha e_{i, t - 1} + v_{it},
\end{align}
where $\rho \in \left\lbrace 0, 0.7 \right\rbrace$ controls autocorrelation in the factors, and $\mu_{it}$, $v_{it}$ are mutually independent with $v_{t} = (v_{1, t}, \dots, v_{N, t})^\tran$ being i.i.d. $N \qty(0, \Omega)$ for $t = 1, \dots, T$. 
The scalar $\alpha \in \left\lbrace 0, 0.3 \right\rbrace $ allows mild serial correlation, and $\Omega_{ij} = \beta^{\abs{i - j}}$ with $\beta \in \left\lbrace 0, 0.3 \right\rbrace $ allows mild cross-sectional correlation \cite[see][]{bates_consistent_2013,baltagi_identification_2017}.
The break fraction is set to $0.5$ and estimated using the method of \cite{baltagi_identification_2017} for each of the $1000$ replications. 
Due to the sup-$Z$ and sup-$W$ tests requiring two separate fits of principal components, we set the trimming parameter to be 0.3 in order to ensure sufficient data in each subsample. 
Disentanglement requires running both tests, which could lead to a higher family wise error rate. 
Thus, we report results obtained with
the unadjusted $p$-values, and the Bonferroni-Holm adjusted $p$-values, which controls the overall Type 1 error rate by 
comparing the lower $p$-value in each pair of tests against half the significance level,
for each replication. 
\subsection{Simulation Results}
\begin{table}
	\scriptsize
	\caption{\label{tab:size_small}Size of sup-$Z$ and sup-$W$ Tests, $N = 200, r = 3$}
	\centering
	\begin{threeparttable}
		\begin{tabular}[t]{ccccccccc}
			\toprule
			\multicolumn{4}{c}{ } & \multicolumn{2}{c}{sup-Z Test} & \multicolumn{2}{c}{sup-W Test} & \multicolumn{1}{c}{sup-W Individual} \\
			\cmidrule(l{3pt}r{3pt}){5-6} \cmidrule(l{3pt}r{3pt}){7-8} \cmidrule(l{3pt}r{3pt}){9-9}
			$T$ & $\rho$ & $\alpha$ & $\beta$ & Unadj. & Adj. & Unadj. & Adj. &  \\
			\midrule
			&  & 0.0 & 0.0 & 0.174 & 0.134 & 0.012 & 0.004 & 0.030\\
			
			\multirow{-2}{*}[0.5\dimexpr\aboverulesep+\belowrulesep+\cmidrulewidth]{\centering\arraybackslash 200} &  & 0.3 & 0.3 & 0.186 & 0.138 & 0.014 & 0.010 & 0.028\\
			
			&  & 0.0 & 0.0 & 0.103 & 0.059 & 0.064 & 0.050 & 0.050\\
			
			\multirow{-2}{*}[0.5\dimexpr\aboverulesep+\belowrulesep+\cmidrulewidth]{\centering\arraybackslash 500} & \multirow{-4}{*}[1.5\dimexpr\aboverulesep+\belowrulesep+\cmidrulewidth]{\centering\arraybackslash 0.0} & 0.3 & 0.3 & 0.093 & 0.062 & 0.052 & 0.042 & 0.049\\
			\cmidrule{1-9}
			&  & 0.0 & 0.0 & 0.150 & 0.098 & 0.024 & 0.014 & 0.037\\
			
			\multirow{-2}{*}[0.5\dimexpr\aboverulesep+\belowrulesep+\cmidrulewidth]{\centering\arraybackslash 200} &  & 0.3 & 0.3 & 0.142 & 0.100 & 0.024 & 0.014 & 0.033\\
			
			&  & 0.0 & 0.0 & 0.078 & 0.046 & 0.118 & 0.082 & 0.055\\
			
			\multirow{-2}{*}[0.5\dimexpr\aboverulesep+\belowrulesep+\cmidrulewidth]{\centering\arraybackslash 500} & \multirow{-4}{*}[1.5\dimexpr\aboverulesep+\belowrulesep+\cmidrulewidth]{\centering\arraybackslash 0.7} & 0.3 & 0.3 & 0.072 & 0.046 & 0.074 & 0.054 & 0.049\\
			\bottomrule
		\end{tabular}
		\begin{tablenotes}
			\item \textit{Note: } 
			\item Entries report the rejection frequencies for the sup-$Z$-test for break in factor variance, and sup-$W$-test for break in factor loadings. Nominal size is 5\%. The parameters $\alpha$ and $\beta$ denote the degree of serial and cross-sectional correlation in the idiosyncratic component, respectively, whereas $\rho$ denotes the degree of autocorrelation in the factors.
		\end{tablenotes}
	\end{threeparttable}
\end{table}
We present the size analysis in \Cref{tab:size_small}. Across all specifications where $T > N$, the sup-$Z$ test has a nominal size close to 5\% regardless of the serial and cross-sectional correlation in the errors. The sup-$W$ test appears to suffer from some undersizing; however, this problem is isolated to small $T$. For larger $T$, implementation of the Bonferroni-Holm procedure to adjust the $p$-values also seems to correct any oversizing, so we advocate for its use. 

\Cref{tab:z_w_power} presents the power of the sup-$Z$ and sup-$W$ tests across all types of breaks; both have good power and are rejecting correctly only on their respective break types. This contrasts with the tests of \cite{han_tests_2015} and \cite{baltagi_estimating_2021}, which reject across all break types, and thus cannot discern which type of break has occurred. 

\begin{table}
	\scriptsize
	\caption{\label{tab:z_w_power}Power of sup-$Z$ and sup-$W$ Tests, $r = 3$, $N = 200$, $\alpha = \beta = 0.3$}
	\centering
	\begin{threeparttable}
		\begin{tabular}[t]{cccccccccc}
			\toprule
			\multicolumn{3}{c}{ } & \multicolumn{2}{c}{sup-Z Test} & \multicolumn{3}{c}{sup-W Test} & \multicolumn{1}{c}{ } \\
			\cmidrule(l{3pt}r{3pt}){4-5} \cmidrule(l{3pt}r{3pt}){6-8}
			Break Type & $T$ & $\rho$ & Unadj. & Adj. & Unadj. & Adj. & Individual & HI & BKW\\
			\midrule
			&  & 0.0 & 0.070 & 0.070 & 0.994 & 0.994 & 0.492 & 0.880 & 0.712\\
			
			& \multirow{-2}{*}[0.5\dimexpr\aboverulesep+\belowrulesep+\cmidrulewidth]{\centering\arraybackslash 200} & 0.7 & 0.036 & 0.036 & 0.992 & 0.992 & 0.568 & 0.984 & 0.964\\
			
			&  & 0.0 & 0.042 & 0.042 & 1.000 & 1.000 & 0.798 & 1.000 & 1.000\\
			
			\multirow{-4}{*}[1.5\dimexpr\aboverulesep+\belowrulesep+\cmidrulewidth]{\centering\arraybackslash $W \neq 0$} & \multirow{-2}{*}[0.5\dimexpr\aboverulesep+\belowrulesep+\cmidrulewidth]{\centering\arraybackslash 500} & 0.7 & 0.039 & 0.039 & 1.000 & 1.000 & 0.800 & 1.000 & 1.000\\
			\cmidrule{1-10}
			&  & 0.0 & 1.000 & 1.000 & 0.006 & 0.006 & 0.041 & 1.000 & 1.000\\
			
			& \multirow{-2}{*}[0.5\dimexpr\aboverulesep+\belowrulesep+\cmidrulewidth]{\centering\arraybackslash 200} & 0.7 & 1.000 & 0.996 & 0.014 & 0.014 & 0.063 & 1.000 & 0.996\\
			
			&  & 0.0 & 1.000 & 1.000 & 0.038 & 0.038 & 0.057 & 1.000 & 1.000\\
			
			\multirow{-4}{*}[1.5\dimexpr\aboverulesep+\belowrulesep+\cmidrulewidth]{\centering\arraybackslash $Z \neq I$} & \multirow{-2}{*}[0.5\dimexpr\aboverulesep+\belowrulesep+\cmidrulewidth]{\centering\arraybackslash 500} & 0.7 & 1.000 & 1.000 & 0.060 & 0.060 & 0.066 & 1.000 & 1.000\\
			\cmidrule{1-10}
			&  & 0.0 & 1.000 & 1.000 & 1.000 & 1.000 & 0.594 & 1.000 & 1.000\\
			
			& \multirow{-2}{*}[0.5\dimexpr\aboverulesep+\belowrulesep+\cmidrulewidth]{\centering\arraybackslash 200} & 0.7 & 1.000 & 1.000 & 0.998 & 0.998 & 0.619 & 1.000 & 1.000\\
			
			&  & 0.0 & 1.000 & 1.000 & 1.000 & 1.000 & 0.805 & 1.000 & 1.000\\
			
			\multirow{-4}{*}[1.5\dimexpr\aboverulesep+\belowrulesep+\cmidrulewidth]{\centering\arraybackslash $W \neq 0$ and $Z \neq I$} & \multirow{-2}{*}[0.5\dimexpr\aboverulesep+\belowrulesep+\cmidrulewidth]{\centering\arraybackslash 500} & 0.7 & 1.000 & 1.000 & 1.000 & 1.000 & 0.809 & 1.000 & 1.000\\
			\bottomrule
		\end{tabular}
		\begin{tablenotes}
			\item \textit{Note: } 
			\item Entries denote the rejection rates across different simulated break types; $W \neq 0$ denotes a break in the factor loadings, $Z \neq I$ a break in the factor variance, and $W \neq 0$ and $Z \neq I$ denoting a break in both. HI and BKW denote the test statistics of Han and Inoue (2015) and Baltagi et al (2021), respectively. See \Cref{tab:size_small} for explanation of $\alpha, \beta$ and $\rho$.
		\end{tablenotes}
	\end{threeparttable}
\end{table}

\section{Empirical Application}
\label{sec:empirical}

We apply our methodology to FRED-QD, a standard U.S. quarterly macroeconomic dataset \citep[see][]{mccracken_fred-qd_2020}, developed to mimic various iterations of the \cite{stock_forecasting_2009} dataset; both are popular in the factor modeling literature to test and date breaks \cite[e.g.][]{chen_detecting_2014,baltagi_estimating_2021}.

We consider the sample 1959Q3-2019Q4 from the FRED-QD dataset, and adopt the suggested data cleaning and transformations, removing top level aggregates to yield a panel of 124 series. We leave additional details of the data in Table B.1 of the Supplement. 
The start date aligns with extant work, and we end before the COVID-19 pandemic, due to the current lack of consensus on how to deal with the COVID-19 pandemic in factor models \cite[e.g., see][]{ng_modeling_2021,stock_comovement_2021,han_global_2024}. We consider two breaks in 1984Q1 and 2008Q3, as estimated by the procedure of \cite{baltagi_estimating_2021} using 3-to-6 factors and a trimming parameter of 0.1, which we henceforth refer to as the Great Moderation and Great Recession break, respectively.\footnote{These  are the breaks which the \cite{baltagi_estimating_2021} procedure tends to estimate, and are only statistically significant for 3-to-6 factors.} 
These breaks align with known events and existing evidence: the Great Moderation break is consistent with work that use a sample of 1960 to the mid 2000s and find a break in the early 1980s \cite[e.g.][]{stock_forecasting_2009,breitung_testing_2011,chen_detecting_2014,baltagi_estimating_2021}, and the Great Recession break has some evidence of breaks in loadings as noted by \cite{stock_disentangling_2012}, \cite{barigozzi_simultaneous_2018}, and \cite{cheng_shrinkage_2016}.

The number of factors in each subsample differs across estimators and we generally find somewhere between 2-to-4 factors in each sub-regime. 
We do not find any evidence that the number of factors changed pre-and-post Great Moderation, though the exact number differs between estimators. 
The \cite{bai_determining_2002}, \cite{onatski_determining_2010}, and \cite{ahn_eigenvalue_2013} procedures find 4, 3, and 1 factors, respectively. Pre-and-post Great Recession, the evidence is mixed whether the number of factors changed. The \cite{onatski_determining_2010} procedure finds 3 factors pre-and-post Great Recession, whereas \cite{bai_determining_2002} and \cite{ahn_eigenvalue_2013} find that the number of factors change from 4-to-6 and 1-to-2 respectively. 
We thus proceed as follows. Given our baseline procedure, like the extant literature, treats the number of factors as being the same pre- and post-break, we consider 1-to-4 factors since this aligns with the number of factors we find in each regime when the number of factors do not change. Nonetheless, as the \cite{bai_determining_2002} and \cite{ahn_eigenvalue_2013} procedures suggest the possibility that the number of factors change pre-and-post Great Recession, we also consider a changing number of factors, as per the modification of our baseline procedure described in Section A.6 of the Supplement.


\subsection{Joint test results}
\Cref{tab:joint} reports the $p$-values for the sup-$Z$ and joint sup-$W$ tests, which test for breaks in the factor variance and factor loadings, respectively, with the Great Moderation and the Great Recession break dates estimated by the \cite{baltagi_estimating_2021} procedure. 
For the Great Moderation break, we can reject the sup-$Z$ test, but not the joint $W$-test, consistent with the idea that the Great Moderation is a break in the factor variance, but not the factor loading.
For the Great Recession break, the results are somewhat mixed. 
From 1-to-4 factors, we cannot reject the sup-$Z$ test for 1 or 2 factors, but can reject for 3 or 4 factors. 
We can always reject the $W$-test, and can therefore conclude that there is evidence that the loading change. 
Given the \cite{onatski_determining_2010} procedure finds 3 factors, and is the only one that finds the number of factors do not change pre-and-post Great Recession, this would suggest both the factor variance and factor loadings change. 
If we allow for the number of factors to change, either from 1-to-2 \citep{ahn_eigenvalue_2013} or 3-to-6 \citep{bai_determining_2002}, the results are similarly mixed whether there is a break in the factor variance.

\begin{table}
\scriptsize
\caption{\label{tab:joint}Joint Test Results}
\centering
\begin{threeparttable}
\begin{tabularx}{\textwidth}{c}
\centering
\hspace{0.20\textwidth}\begin{tabular}[t]{ccccc}
\toprule
\multicolumn{1}{c}{ } & \multicolumn{2}{c}{sup-$Z$ Test $p$-values} & \multicolumn{2}{c}{sup-$W$ Test $p$-values} \\
\cmidrule(l{3pt}r{3pt}){2-3} \cmidrule(l{3pt}r{3pt}){4-5}
$\tilde{r}$ & Unadjusted & Adjusted & Unadjusted & Adjusted\\
\midrule
\addlinespace[0.3em]
\multicolumn{5}{l}{\textbf{Great Moderation (1984 Q1), 1959 Q3 - 2008 Q3 Sample}}\\
\hspace{1em}1 & 0.030 & 0.060 & 0.985 & 0.985\\
\hspace{1em}2 & 0.021 & 0.043 & 0.407 & 0.407\\
\hspace{1em}3 & 0.000 & 0.000 & 0.485 & 0.485\\
\hspace{1em}4 & 0.000 & 0.000 & 0.202 & 0.202\\
\addlinespace[0.3em]
\multicolumn{5}{l}{\textbf{Great Recession (2008 Q3), 1984 Q2 - 2019 Q4 Sample}}\\
\hspace{1em}1 & 1.000 & 1.000 & 0.031 & 0.062\\
\hspace{1em}2 & 0.109 & 0.109 & 0.000 & 0.001\\
\hspace{1em}3 & 0.013 & 0.013 & 0.000 & 0.000\\
\hspace{1em}4 & 0.000 & 0.000 & 0.000 & 0.000\\
\hspace{1em}3 -> 6 & 0.004 & 0.004 & 0.000 & 0.000\\
\hspace{1em}1 -> 2 & 1.000 & 1.000 & 0.001 & 0.003\\
\bottomrule
\end{tabular}
\end{tabularx}
\begin{tablenotes}
\item \textit{Note: } 
\item Rejection of the sup-$Z$ test corresponds to a break in the factor covariance matrix, and rejection of the sup-$W$ test corresponds to a break in the loadings across the entire cross-section. For the Great Moderation sample, Bai and Ng (2002) and Onatski (2010) estimate 3 factors pre- and post-break, whereas Ahn and Horenstein (2011) estimates 1 factor pre- and post-break. For the Global Recession sample, Bai and Ng (2002) estimates 3 and 6 factors pre- and post-break, Onatski (2010) estimates 3 factors pre- and post-break, and Ahn and Horenstein (2011) estimates 1 and 2 factors pre- and post-break.
\end{tablenotes}
\end{threeparttable}
\end{table}
%

A key tenet of the argument of the importance of distinguishing breaks in the factor variance and loadings is because the routine normalization applied in factor models rules out the possible interpretation that the factor variance changed; breaks in the factor variance are necessarily subsumed into breaks in the factor loadings. Given the clear rejection of the sup-$Z$ test for the Great Moderation break, our results suggest that rotational breaks, which may stem from changes in the factor variance, are important when modeling U.S. macroeconomic data using factors.


Beyond testing, our decomposition also permits an estimate of the total variance pre-and-post break, $tr \left( \Sigma_F \right)$ and $tr \left(Z \Sigma_F Z^\tran \right)$, respectively. 
%
In order to quantify this change, we present an estimate of the ratio, $tr \left( H_{0, 1}^\tran Z \Sigma_F Z^\tran H_{0, 1} \right) / tr \left( H_{0, 1}^\tran \Sigma_F H_{0, 1}\right)$, in \Cref{tab:sigma_f}, along with a 95\% confidence interval.
We estimate that the Great Moderation was associated with an over 70\% reduction in the total variance of the factors, across the specification of 1 to 4 factors. 
For the Great Recession break, this ratio is close to 1, with 1 being within the confidence interval for all except specifications except 4 factors, which could be due to the effects of extra factor appearing as evidenced by the \cite{bai_determining_2002} criterion. 
Thus, coupled with the mixed evidence of the sup-$Z$ test for the Great Recession break, we conclude that breaks in the factor variance were less important for understanding the Great Recession break relative to the Great Moderation break.\footnote{Note that rejection of the sup-$Z$ test corresponds to a break in the factor covariance matrix, and thus may occur even when the total factor variance (trace of the variance matrix) remains stable. This can occur if the individual factors' variances (diagonal elements) break but their sum remains the same, if the correlation between factors (off-diagonals) change, or some combination thereof. Since we can only estimate the space spanned by the factors, the sup-$Z$ test cannot tell the two cases apart. All we can claim is given the trace appears similar pre- and post-break with the Great Recession, breaks in factor variances are less important, relative to the Great Moderation.} 
We conclude that the first two factors had a sizeable role in the total reduction of the variance when looking at specific factors.\footnote{The Great Moderation was associated with the variance of the first two factors falling by about 70\% and the third and fourth factor falling by over 85\%. Given the estimate of the trace is quite similar as we move from 1-4 factors, this fall is thus largely accounted for by the first two factors.} 
While we caveat and caution that one should not over-interpret factors obtained via principal components, the first factor obtained via principal components is usually interpreted as a real activity factor and the second factor representing nominal quantities like prices \cite[see, e.g.][]{stock_forecasting_2002}. 
If one attaches such an interpretation to these factors, we can conclude that the reduction in the total factor variance associated with the Great Moderation are likely associated with the broader macro-economy, as opposed to isolated idiosyncratic groups of variables.

\begin{table}
\scriptsize
\caption{\label{tab:sigma_f}$tr(H_{0, 1}^\tran Z \Sigma_F Z^\tran H_{0, 1}) / tr(H_{0, 1}^\tran \Sigma_F H_{0, 1})$  Ratio Estimates}
\centering
\begin{threeparttable}
\begin{tabularx}{0.8\textwidth}{c}
\hspace{0.1\textwidth} \begin{tabular}[t]{ccc}
\toprule
$r$ & $tr(\tilde{Z} \tilde{Z}^\tran) / tr(I_r)$ & 95\% Bootstrap CI\\
\midrule
\addlinespace[0.3em]
\multicolumn{3}{l}{\textbf{Great Moderation (1984 Q1), 1959 Q3 - 2008 Q3 Sample}}\\
\hspace{1em}1 & 0.297 & {}[0.229, 0.345]\\
\hspace{1em}2 & 0.244 & {}[0.18, 0.305]\\
\hspace{1em}3 & 0.227 & {}[0.215, 0.279]\\
\hspace{1em}4 & 0.216 & {}[0.196, 0.282]\\
\addlinespace[0.3em]
\multicolumn{3}{l}{\textbf{Great Recession (2008 Q3), 1984 Q2 - 2019 Q4 Sample}}\\
\hspace{1em}1 & 1.040 & {}[0.507, 1.428]\\
\hspace{1em}2 & 1.180 & {}[0.881, 1.497]\\
\hspace{1em}3 & 1.588 & {}[0.913, 2.016]\\
\hspace{1em}4 & 1.255 & {}[1.206, 1.556]\\
\hspace{1em}3 -> 6 & 1.641 & {}[0.989, 2.103]\\
\hspace{1em}1 -> 2 & 1.695 & {}[0.947, 2.295]\\
\bottomrule
\end{tabular}
\end{tabularx}
\begin{tablenotes}
\item \textit{Note: } 
\item Confidence intervals obtained via a block bootstrap procedure with a block size of $m = 8$.
\end{tablenotes}
\end{threeparttable}
\end{table}
Overall, our results would suggest that the Great Moderation is more likely a break in the factor variance than a break in the factor loadings. 
This nuances a number of existing results \cite[e.g.][]{chen_detecting_2014,baltagi_estimating_2021} who necessarily interpret and associate all breaks in the factor structure such as the Great Moderation as breaks in the factor loadings. Being able to associate the Great Moderation break with a change in the factor variance reconciles how one can understand the underlying dynamic factor model in \Cref{eqn:static_form2,eqn:factor_dynamics} with the broader literature. 
By disentangling changes in the factor variance and loadings, we are able to interpret the Great Moderation as a change in the variance of the factors, and thus more likely associated with a break in Equation \eqref{eqn:factor_dynamics}. 
This is consistent with interpretations advanced by \cite{sims_were_2006}, \cite{primiceri_time_2005}, and \cite{cogley_drifts_2005}, whose central claims are the Great Moderation was a result of smaller shocks hitting the macro-economy, and thus interpret the Great Moderation as a change in $\Sigma_\eta$ in Equation \eqref{eqn:factor_dynamics}.\footnote{Note that a break in Equation \eqref{eqn:factor_dynamics} can also imply a break in the $\Phi$'s, so our results are merely consistent, but do not confirm, that the break is in $\Sigma_\eta$, which is a much stronger claim.} 
By disentangling the break, our procedure appears successful in making this distinction, and we are thus able to reconcile the results from the factor modeling literature with the broader Great Moderation literature, wherein this documented evidence of breaks in the factor loadings may have conflated changes in the factor variance with the factor loadings.
In contrast, the break associated with the Great Recession appears more mixed; while we are able to find breaks in the factor loadings, evidence of breaks in the factor variance is more mixed, aligning with extant evidence in \cite{stock_disentangling_2012}. 

\subsection{Individual Test Results}
Besides the joint sup-$W$ test, we can also test for breaks in loadings individually. \Cref{tab:w_ind_test} presents the number of series where we can reject at least one of their factor loadings breaking, controlling for a possible break in the factor variance. 
For comparison, we also apply \cite{breitung_testing_2011}'s test for breaks in the factor loadings though we caution that direct comparison is not straightforward due to their use of \emph{pseudo} factors, and hence do not allow for changes in the factor variance. 
We nonetheless suggest two tentative conclusions. First, there is some, albeit weak, evidence that by accounting for changes in the factor variance, one may find (marginally) fewer breaks in the loadings associated with the Great Moderation. 
Second, when we consider the Great Recession break, there is some evidence that we are able to find more series experiencing a break in the factor loadings compared to the \cite{breitung_testing_2011} procedure. 
Nonetheless, it seems these results heavily rely on the estimated number of factors. 
In any case, putting aside issues of direct comparability of our procedure relative to that of \cite{breitung_testing_2011}, the issue of multiple-hypothesis testing is prevalent due to the high number of individual tests. 
Indeed, when we applied a Bonferroni-Holm adjustment, which is known to be conservative, to the individual tests, we unsurprisingly find far fewer breaks in the individual factor loadings, though the general pattern that we find relative to the \cite{breitung_testing_2011} procedure, as reported in \Cref{tab:w_ind_test}, remains. 
We therefore encourage users to use the joint sup-$W$-test, which is specifically designed to overcome the issue of multiple testing, especially if the joint sup-$W$-test is the relevant hypothesis of interest. 
If practitioners are interested in specifically testing for evidence of a break in a group of variables (e.g. prices), rather than conducting individual tests which may be afflicted by issues associated with multiple testing, one can instead opt to conduct a joint sup-$W$ test over a specific group of variables, where the summation can be taken over all series in that group when constructing the test statistic.

\begin{table}
\scriptsize
\caption{\label{tab:w_ind_test}Individual Series Loading Break Test Rejection Counts. }
\centering
\begin{threeparttable}
\begin{tabularx}{\textwidth}{c}
\hspace{0.2\textwidth}\begin{tabular}[t]{ccc}
\toprule
$\tilde{r}$ & Individual sup-$w_i$ & Breitung and Eickmeier (2011)\\
\midrule
\addlinespace[0.3em]
\multicolumn{3}{l}{\textbf{Great Moderation (1984 Q1), 1959 Q3 - 2008 Q3 Sample}}\\
\hspace{1em}1 & 1 & 7\\
\hspace{1em}2 & 13 & 13\\
\hspace{1em}3 & 19 & 29\\
\hspace{1em}4 & 23 & 29\\
\addlinespace[0.3em]
\multicolumn{3}{l}{\textbf{Great Recession (2008 Q3), 1984 Q2 - 2019 Q4 Sample}}\\
\hspace{1em}1 & 5 & 8\\
\hspace{1em}2 & 17 & 5\\
\hspace{1em}3 & 24 & 30\\
\hspace{1em}4 & 42 & 18\\
\bottomrule
\end{tabular}
\end{tabularx}
\begin{tablenotes}
\item \textit{Note: } 
\item Numbers in cells represent the count of rejections of the null hypothesis that the loadings of an individual series broke at the given break date, (treated as unknown, 5\% significance level). Total of 124 series in each subsample.
\end{tablenotes}
\end{threeparttable}
\end{table}

\subsection{Which variables experienced a break in their loadings?}
\label{sec:empirical:variance_decomposition}
To further understand the break associated with the Great Moderation and Great Recession, we explored which types of variables had breaks in loadings. 
In order to understand whether these breaks were important for understanding the variation in variables, we calculate an $R^2$ measure for each series subject to no restrictions, and with the restriction that there were no breaks in the loadings.\footnote{See Section B.2 in the Supplement for how we use \Cref{eqn:projection_ert:1} to do so.} Thus, these $R^2$ statistics should have a large (small) difference if the breaks in the loadings were (un)important.
\begin{figure}
\centering
\includegraphics[width=0.9\textwidth]{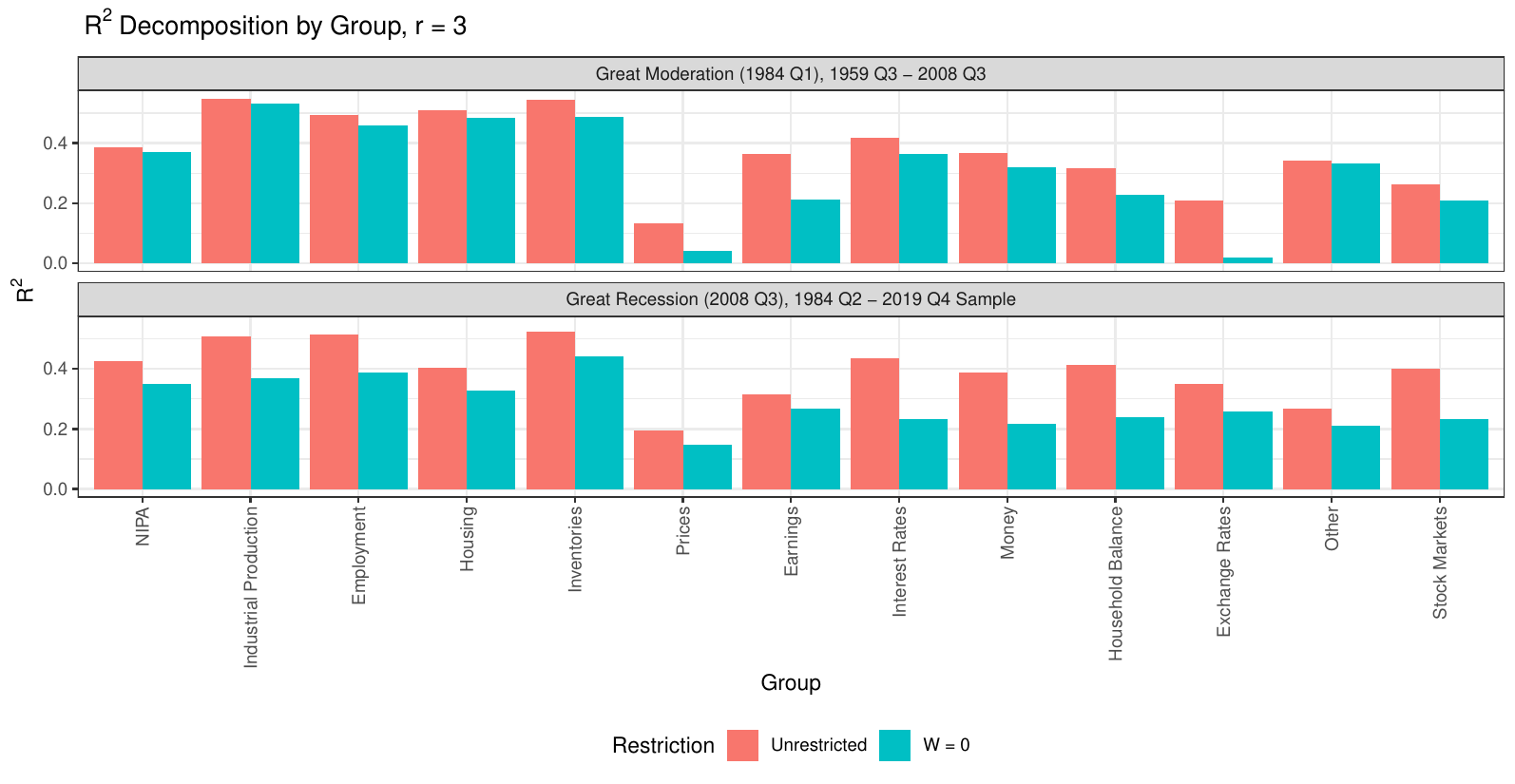}
\caption{$R^2$ Statistics for unrestricted and restricted common component ($W = \mathbf{0}$) for Great Moderation Subsample, and Great Recession Subsample, for $r = 3$. }
\label{fig:var_decomp_group}
\end{figure}
\Cref{fig:var_decomp_group} presents the unrestricted and $W = 0$ restricted $R^2$ statistics averaged across all series by category. We present the $R^2$ for $r = 3$ but note that these conclusions are very similar for $r = 1$ to $4$. For the Great Moderation break, breaks in the loadings appear to be more important for prices, earnings, exchange rates, and household balances. Two of these categories at least plausibly coincide with extant knowledge: the Great Inflation which preceded the Great Moderation and affected prices; and the collapse of the Bretton Woods system in the mid-1970s which affected exchange rate variables. For the Great Recession break, while it appears that breaks in the loadings were important for many variable categories, they appear important for financial variables in categories such as exchange rate, money, and the stock market. Additionally, it appears that breaks in loadings were important for prices, also documented by \cite{stock_disentangling_2012}.
\section{Conclusion}
The existing literature on structural breaks in factor models by and large does not distinguish between breaks between the factor variance and loadings, due to the need of a normalization during estimation. 
We argue it is important to distinguish them, as both can lead to different economic interpretations. 
To address this, we develop a projection-based decomposition of the structural break into a rotational and orthogonal shift component, which are naturally interpreted as a change in factor variance and loadings, respectively. 
The estimators are simple to calculate and lead to two test statistics which can be compared to standard supremum-type critical values in order to disentangle structural breaks in the factor variance and loadings. 
Their finite sample performance is confirmed by a Monte Carlo study. 
Applying our procedure to U.S. macroeconomic data, we show that the Great Moderation is more naturally associated with a break with the factor variance as opposed to the factor loadings. 
This is in contrast to extant methods which do not distinguish the two and would, by construction, associate these breaks with the factor loadings. 
Our projection-based decomposition allows us to estimate that the Great Moderation is associated with an over 70\% reduction in the factor variance, a result precluded \emph{a priori} if the break is not disentangled, and thus highlights the importance of doing so. 
Our results thus reconciles factor model work which find breaks in the factor loadings with the broader literature on the Great Moderation, which more naturally interprets the Great Moderation as a change in the common variance across variables.

Our framework, in principle, permits the decomposition of any break in the factor structure into variance and loading components, which may be helpful in modeling and/or parameterizing breaks in factor models, including those which are estimated parametrically using state-space methods. 
Furthermore, natural questions remain on how different break types can affect the subsequent use of factors in common applications, such as  factor-augmented forecasts or vector auto-regressions. 
These represent interesting avenues for future work to build on the ideas that we have developed.

%


\bibliographystyle{apalike}
\small
\bibliography{bibtex.bib}

\end{document}